\theoremstyle{plain}
\newtheorem{theorem}{Theorem}
\newtheorem{proposition}[theorem]{Proposition}
\newtheorem{lemma}[theorem]{Lemma}
\newcounter{algor}
\theoremstyle{definition}
\newtheorem{definition}[theorem]{Definition}
\newtheorem{algori}[algor]{Algorithm}
\theoremstyle{remark}
\newtheorem{rmk}[theorem]{Remark}
\newcommand{\QQ}{\mathbb Q}
\newcommand{\RR}{\mathbb R}
\newcommand{\CC}{\mathbb C}
\newcommand{\AAA}{\mathbb A}
\newcommand{\XX}{\mathbb X}
\newcommand{\YY}{\mathbb Y}
\newcommand{\WW}{\mathbb W}
\newcommand{\BB}{\mathbb B}
\newcommand{\UU}{\mathbb U}
\newcommand{\SSS}{\mathbb S}
\newcommand{\mc}[1]{\mathcal{#1}}
\newcommand{\wtil}[1]{\widetilde{#1}}
\newcommand{\opn}[1]{\operatorname{#1}}
\newcommand{\jt}{\textstyle}
\newcommand{\ee}[1]{{(#1)}}
\newcommand{\quo}{\opn{quo} }
\newcommand{\rema}{\opn{rem}}
\begin{document}
\begin{frontmatter}

\title{A Local Construction of the Smith Normal Form of
  a Matrix Polynomial}

\thanks{The authors were supported in part by the Director,
Office of Science, Computational and Technology Research, U.S.
Department of Energy under Contract No. DE-AC02-05CH11231, and
by the National Science Foundation through grant
DMS-0955078.}

\author{Jon Wilkening}
\address{Department of Mathematics,
  University of California, Berkeley, CA 94720-3840, USA}
\ead{wilken@math.berkeley.edu}

\author{Jia Yu}
\address{Department of Mathematics,
  University of California, Berkeley, CA 94720-3840, USA}
\ead{jiay@math.berkeley.edu}


\begin{abstract}
  We present an algorithm for computing a Smith form with multipliers
  of a regular matrix polynomial over a field.  This algorithm differs
  from previous ones in that it computes a local Smith form for each
  irreducible factor in the determinant separately and then combines
  them into a global Smith form, whereas other algorithms apply a
  sequence of unimodular row and column operations to the original
  matrix.  The performance of the algorithm in exact arithmetic is
  reported for several test cases.
\end{abstract}

\begin{keyword}
Matrix polynomial, canonical forms, Smith form, Jordan chain,
symbolic computation

\vspace*{5pt} {\bf AMS subject classifications.} 68W30, 15A21,
11C20, 11C08
\end{keyword}

\end{frontmatter}

\section{Introduction}

Canonical forms are a useful tool for classifying matrices,
identifying their key properties, and reducing complicated systems of
equations to the de-coupled, scalar case.  When working with matrix
polynomials over a field $K$, one fundamental canonical form, the
Smith form, is defined. It is a diagonalization
\begin{equation} \label{eqn:smith}
  A(\lambda)=E(\lambda)D(\lambda)F(\lambda)
\end{equation}
of the given matrix $A(\lambda)$ by unimodular matrices $E(\lambda)$
and $F(\lambda)$ such that the diagonal entries $d_i(\lambda)$ of
$D(\lambda)$ are monic polynomials and $d_i(\lambda)$ is divisible by
$d_{i-1}(\lambda)$ for $i\ge2$.

This factorization has various applications.  The most common one
\cite{Goh1982,kailath80,neven93} involves solving the system of differential
equations
\begin{equation}\label{eq_ode}
A^{(q)}\frac{d^q x}{d t^q}+\dots+A^{(1)}\frac{d x}{d t}+A^{(0)}
x=f(t),
\end{equation}
where $A^{(0)}$, \dots, $A^{(q)}$ are $n\times n$ matrices over $\CC$.
For brevity, we denote this system by $A(d/dt)x=f$, where
$A(\lambda)=A^{(0)}+A^{(1)} \lambda+\dots+A^{(q)} \lambda^q$. Assume
for simplicity that $A(\lambda)$ is regular, i.e.~$\det[A(\lambda)]$
is not identically zero, and that (\ref{eqn:smith}) is a Smith form of
$A(\lambda)$. The system \eqref{eq_ode} is then equivalent to
\[\left(
    \begin{array}{ccc}
      d_1(\frac{d}{dt}) &  &  \\
       & \ddots &  \\
       &  & d_n(\frac{d}{dt}) \\
    \end{array}
  \right)\left(
           \begin{array}{c}
             y_1\\
             \vdots\\
             y_n\\
           \end{array}
         \right)=\left(
           \begin{array}{c}
             g_1\\
             \vdots\\
             g_n\\
           \end{array}
         \right),
\]
where $y=F(d/dt)x(t)$ and $g=E^{-1}(d/dt)f(t)$.  Note that
$E^{-1}(\lambda)$ is a matrix polynomial over $\CC$ due to the
unimodularity of $E(\lambda)$. This system splits into $n$ independent
scalar ordinary differential equations
\[d_i\Big(\frac{d}{dt}\Big)y_i(t)=g_i(t),\qquad 1\le i\le n,\]
and the solution of \eqref{eq_ode} is then given by $x=F^{-1}(d/dt)
y$, where $F^{-1}(\lambda)$ is also a matrix polynomial over $\CC$.

Another important application of the Smith form concerns the study of
the algebraic structural properties of systems in linear control
theory \cite{kailath80,rosenbrock70,vanDooren83}.  For example, the
zeros and poles of a multivariable transfer function $H(s)$ are
revealed by the Smith-McMillan form of $H(s)$, which is a close
variant of the Smith form, but for rational (as opposed to polynomial)
matrices.  In many applications, one only needs to compute a minimal
basis for the kernel of a matrix polynomial.  Specialized algorithms
\cite{neven93,zuniga09} have been developed for this sub-problem of the
Smith form calculation.

Smith forms of linear matrix polynomials (i.e.~matrix pencils) are
related to the concept of similarity of matrices. A fundamental
theorem in matrix theory \cite{Gant1960,Goh1982} states that two
square matrices $A$ and $B$ over a field $K$ are similar if and only
if their characteristic matrix polynomials $\lambda I-A$ and $\lambda
I-B$ have the same Smith form $D(\lambda)$.  Other applications of
this canonical form include finding the Frobenius form
\cite{Villard94, Villard1997} of a matrix $A$ over a field by
computing the invariant factors of the matrix pencil $\lambda I-A$.

Many algorithms have been developed for the computation of canonical
forms of matrix polynomials in floating point arithmetic.  One common
approach involves finding an equivalent linear matrix pencil with the
same finite zeros as the original matrix polynomial and a closely
related Smith form \cite{vanDooren83}.  The Kronecker form
\cite{vanDooren79, vanDooren83,demmel} of the matrix pencil is then
computed to determine the eigenstructure of the original polynomial
matrix.  Another approach centers around computing the local spectral
structure of a matrix polynomial at a single complex root, $\lambda_0$,
of the characteristic determinant \cite{vanSchagen,Wilken2007}.
These methods usually boil down to computing kernels of nested
Toeplitz matrices \cite{Wilken2007,zuniga09}.  One advantage of this
local approach over the global matrix pencil approach is that only a
few terms in an expansion of the matrix polynomial in powers of
$\lambda-\lambda_0$ are needed to compute the spectral behavior. This
can lead to a significant computational savings, and also allows for
generalization from matrix polynomials to analytic matrix functions
\cite{vanSchagen,Wilken2007}.  Such local canonical forms can be used
to efficiently compute successive terms in the Laurent expansion of
the inverse of an analytic matrix \cite{howlett,Wilken2007}.  Backward
stability analysis of the effect of roundoff error may be found in
\cite{vanDooren83,Wilken2007,zuniga09}.  A geometric approach to the
perturbation theory of matrix pencils is discussed in \cite{edelman1}.

The symbolic computation of Smith forms of matrices over
$\QQ[\lambda]$ is also a widely studied topic. Kannan \cite{Kann1985}
gave a method for computing the Smith form with repeated
triangularizations of the matrix polynomial over $\QQ$. Kaltofen,
Krishnamoorthy and Saunders \cite{Kal1987} gave the first polynomial
time algorithm for the Smith form (without multipliers) using the
Chinese remainder theorem.  A new class of probabilistic algorithms
(the Monte Carlo algorithms) were proposed by Kaltofen, Krishnamoorthy
and Saunders
%
%
\cite{Kal1987, Kal1990}. They showed that by multiplying the
given matrix polynomial by a randomly generated constant matrix on
the right, the Smith form with multipliers is obtained with high
probability by two steps of computation of the Hermite form. A Las
Vegas algorithm given by Storjohann and Labahn \cite{Stor1994,
Stor1997} significantly improved the complexity by rapidly checking
the correctness of the result of the KKS algorithm. Villard
\cite{Villard93, Villard95} established the first deterministic
polynomial-time method to obtain the Smith form with multipliers by
explicitly computing a good-conditioning matrix that replaces the
random constant matrix in the Las Vegas algorithm. Villard also applied
the method of Marlin, Labhalla and Lombardi \cite{lab1996} to obtain
useful complexity bounds for the algorithm.

We propose a new deterministic algorithm for the symbolic computation
of Smith forms of matrix polynomials over a field in
Section~\ref{sec_algorithm}.  Our approach differs from previous
methods in that we begin by constructing local diagonal forms that we
later combine to obtain a (global) post-multiplier.
Although we do not discuss complexity bounds, we compare the
performance of our algorithm to Villard's method with good
conditioning in Section~\ref{sec_numerical}, and discuss the reasons
for the increase in speed.  The new algorithm is also easy to
parallelize.  In Appendix~\ref{appendix}, we present an algebraic
framework that connects this work to \cite{Wilken2007}, and give a
variant of the algorithm in which all operations are done in the field
$K$ rather than manipulating polynomials directly.

As mentioned above, local canonical forms have been used successfully
to study the structure of a matrix polynomial near a single root
$\lambda_0\in\CC$ of the characteristic determinant.  An important
point that has been neglected in the literature is that these roots
$\lambda_0$ may not be expressible in radicals, or may involve such
complicated expressions that current algorithms can only be carried
out in floating point arithmetic.  A major goal of this paper is to
develop a machinery for computing local forms for all the complex
roots of a $\QQ$-irreducible factor $p(\lambda)$ of the characteristic
determinant simultaneously, without having to resort to floating point
arithmetic at each root separately.  This is done by working over the
fields $\QQ$ or $\QQ+i\QQ$ rather than $\RR$ or $\CC$ when computing
local forms.


\section{Preliminaries}
In this section, we describe the theory of Smith forms of matrix
polynomials over a field $K$, which follows the definition in
\cite{Goh1982} over $\CC$. In practice, $K$ will be
$\QQ$, $\QQ+i\QQ$, $\RR$, or $\CC$, but it is convenient to deal
with all these cases simultaneously. We also give a brief review of the theory
of Jordan chains as well as B\'{e}zout's identity, which play an
important role in our algorithm for computing Smith forms of matrix
polynomials.

\subsection{Smith Forms}
Suppose $A(\lambda)=\sum_{k=0}^q A^{(k)} \lambda^k$ is an $n\times n$
matrix polynomial, where $A^{(k)}$ are $n\times n$ matrices whose
entries are in a field $K$.
Assuming that $A(\lambda)$ is \emph{regular}, i.e.~the determinant of
$A(\lambda)$ is not identically zero, the following theorem is proved
in \cite{Goh1982} (for $K=\CC$).

\begin{theorem}\label{smith} There exist matrix polynomials $E(\lambda)$
  and $F(\lambda)$ over $K$ of size $n\times n$, with constant nonzero
  determinants, such that
\begin{equation}\label{eq_smith}
  A(\lambda)=E(\lambda)D(\lambda)F(\lambda), \qquad
  D(\lambda) = \opn{diag}[d_1(\lambda),\dots,d_n(\lambda)],
\end{equation}
where $D(\lambda)$ is a diagonal matrix with monic scalar polynomials
$d_i(\lambda)$ over $K$ such that $d_i(\lambda)$ is divisible by
$d_{i-1}(\lambda)$.
\end{theorem}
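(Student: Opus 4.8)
The plan is to prove Theorem~\ref{smith} by the classical argument of reducing $A(\lambda)$ to diagonal form through a finite sequence of unimodular row and column operations over the polynomial ring $K[\lambda]$, keeping track at each stage that the transforming matrices accumulate into $E(\lambda)$ and $F(\lambda)$ with constant nonzero determinant. Since $K$ is a field, $K[\lambda]$ is a Euclidean domain, and the three elementary operations --- (i) swapping two rows or columns, (ii) adding a polynomial multiple of one row (column) to another, (iii) multiplying a row or column by a nonzero element of $K$ --- are all realized by left or right multiplication by matrices of determinant a nonzero constant. The product of such matrices again has constant nonzero determinant, and is invertible over $K[\lambda]$ (its inverse is a matrix polynomial), which is exactly the unimodularity property required.

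First I would set up the reduction step: among all nonzero entries of $A(\lambda)$, pick one of least degree and, by row and column swaps, move it to the $(1,1)$ position; call it $a_{11}(\lambda)$. Using division with remainder in $K[\lambda]$, for each $i>1$ write $a_{i1} = q_i a_{11} + r_i$ with $\deg r_i < \deg a_{11}$, and subtract $q_i$ times row~$1$ from row~$i$; do the same with the first row against the other columns. If after this all the new off-diagonal entries $r_i$ in the first row and column vanish, the first row and column are cleared except for $a_{11}$. If some $r_i \neq 0$, it has strictly smaller degree than $a_{11}$, so we swap it into the $(1,1)$ slot and repeat; because the degree of the $(1,1)$ entry strictly decreases, this sub-loop terminates. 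There is one more subtlety: once the first row and column are cleared, $a_{11}$ must divide every remaining entry $a_{ij}$ of the trailing $(n-1)\times(n-1)$ block. If it does not, add the column containing $a_{ij}$ to column~$1$ (reintroducing a nonzero entry in the first column) and run the clearing procedure again; the degree of the pivot strictly drops, so overall termination is still guaranteed. The net effect is that $A(\lambda)$ is transformed to $\operatorname{diag}[d_1(\lambda), A_1(\lambda)]$ where $d_1 \mid$ every entry of $A_1$.

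Next I would induct on $n$: apply the same procedure to the $(n-1)\times(n-1)$ block $A_1(\lambda)$, producing $d_2(\lambda)$ with $d_1 \mid d_2$ (the divisibility being inherited because every entry of $A_1$ was divisible by $d_1$, hence so is $d_2$, and this property is preserved by the subsequent operations since they only form $K[\lambda]$-combinations), and so on down to a full diagonalization $\operatorname{diag}[d_1, \dots, d_n]$ with $d_i \mid d_{i+1}$. Since $A(\lambda)$ is regular, $\det A(\lambda) = c\prod_i d_i(\lambda)$ for a nonzero constant $c$, so no $d_i$ is the zero polynomial; finally, multiply each row by the inverse of the leading coefficient of its diagonal entry (an operation of type (iii)) to make each $d_i$ monic. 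Collecting all the left-multipliers into $E(\lambda)$ and all right-multipliers into $F(\lambda)$ gives \eqref{eq_smith}.

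The main obstacle is not any single step but the bookkeeping needed to guarantee termination: one must exhibit a quantity --- here the degree of the current $(1,1)$ pivot, a non-negative integer --- that strictly decreases every time the clearing sub-loop restarts (whether triggered by a nonzero remainder in the first row/column or by a trailing entry not divisible by the pivot), so that the whole process halts after finitely many elementary operations. Once that monovariant is in place, everything else is routine verification that the three operation types correspond to multiplication by constant-determinant matrix polynomials and that the divisibility chain $d_1 \mid d_2 \mid \cdots \mid d_n$ is preserved through the induction. (The uniqueness of the $d_i$, which is not asserted in the statement, would be handled separately via determinantal divisors $g_k(\lambda) = \gcd$ of all $k\times k$ minors, using that such gcds are invariant under unimodular equivalence; but since the theorem only claims existence, this is not needed here.)
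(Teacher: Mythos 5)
Your proof is correct and is essentially the classical elementary-operations argument that the paper itself relies on by citing \cite{Goh1982} rather than reproving the result: reduce via unimodular row and column operations using division with remainder in $K[\lambda]$, with the degree of the $(1,1)$ pivot as the terminating monovariant, then induct on $n$ and enforce the divisibility chain. Nothing further is needed.
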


Since $E(\lambda)$ and $F(\lambda)$ have constant nonzero
determinants, \eqref{eq_smith} is equivalent to
\begin{equation}\label{eq_smith_2}
U(\lambda)A(\lambda)V(\lambda)=D(\lambda),
\end{equation}
where
$U(\lambda):=E(\lambda)^{-1}$ and $V(\lambda):=F(\lambda)^{-1}$
are also matrix polynomials over $K$.

\begin{definition} The representation in \eqref{eq_smith} or
\eqref{eq_smith_2}, or often $D(\lambda)$
alone, is called a {\em
Smith form} of $A(\lambda)$. The matrices $U(\lambda)$, $V(\lambda)$
are known as {\em multipliers}.
Square matrix polynomials with
constant nonzero determinants like $E(\lambda)$ and $F(\lambda)$
are called {\em unimodular}.
\end{definition}

The diagonal matrix $D(\lambda)$ in the Smith form is unique, while
the representation \eqref{eq_smith} is not. Suppose that
\begin{equation}
  \Delta(\lambda):=\det[A(\lambda)]
\end{equation}
can be decomposed into prime elements $p_1(\lambda), \dots,
p_l(\lambda)$ in the principal ideal domain $K[\lambda]$, that is,
$\Delta(\lambda)=c\prod_{j=1}^l p_j(\lambda)^{\kappa_j}$ where
$c\neq0$ is in the field $K$, $p_j(\lambda)$ is monic and irreducible, and $\kappa_j$ are positive integers for $j=1, \dots, l$. Then the
$d_i(\lambda)$ are given by
\[d_i(\lambda)=\prod_{j=1}^l p_j(\lambda)^{\kappa_{ji}},\qquad
(1\le i\le n)
\]
for some integers $0\leq\kappa_{j1}\leq\dots\leq\kappa_{jn}$
satisfying $\sum_{i=1}^n\kappa_{ji}=\kappa_j$ for $j=1,\dots,l$.

We now define a \emph{local Smith form} for $A(\lambda)$ at
$p(\lambda)$.  Let $p(\lambda)=p_j(\lambda)$ be one of the irreducible
factors of $\Delta(\lambda)$ and define $\alpha_i=\kappa_{ji}$,
$\mu=\kappa_j$.  Generalizing the case that
$p(\lambda)=\lambda-\lambda_j$, we call $\mu$ the algebraic
multiplicity of $p(\lambda)$.

\begin{theorem}\label{local_smith}
Suppose $A(\lambda)$ is an $n\times n$ matrix over $K[\lambda]$ and
$p(\lambda)$ is an irreducible factor of $\Delta(\lambda)$.  There
exist $n\times n$ matrix polynomials $E(\lambda)$ and $F(\lambda)$
such that
\begin{equation}\label{eq_local_smith1}
  A(\lambda)=E(\lambda)D(\lambda)F(\lambda), \qquad
  D(\lambda)=\opn{diag}[p(\lambda)^{\alpha_1},\dots,p(\lambda)^{\alpha_n}],
\end{equation}
where $0\leq\alpha_1\leq\dots\leq\alpha_n$ are non-negative integers
and $p(\lambda)$ does not divide $\det[E(\lambda)]$ or $\det[F(\lambda)]$.
\end{theorem}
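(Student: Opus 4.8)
The plan is to derive the local Smith form from the global Smith form of Theorem~\ref{smith} by localizing $K[\lambda]$ at the prime ideal $(p(\lambda))$, or equivalently by splitting off the $p$-part of each invariant factor with a B\'ezout argument. Concretely, start from a global decomposition $A(\lambda)=E_0(\lambda)D_0(\lambda)F_0(\lambda)$ with $D_0(\lambda)=\opn{diag}[d_1(\lambda),\dots,d_n(\lambda)]$ as in \eqref{eq_smith}. Using the prime factorization of each $d_i(\lambda)$ in the PID $K[\lambda]$, write $d_i(\lambda)=p(\lambda)^{\alpha_i}r_i(\lambda)$ where $p(\lambda)\nmid r_i(\lambda)$, and note that the divisibility chain $d_{i-1}\mid d_i$ forces $0\le\alpha_1\le\dots\le\alpha_n$ (the exponents $\alpha_i$ coincide with the $\kappa_{ji}$ introduced just before the statement). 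So $D_0(\lambda)=D(\lambda)R(\lambda)$ with $D(\lambda)=\opn{diag}[p^{\alpha_1},\dots,p^{\alpha_n}]$ and $R(\lambda)=\opn{diag}[r_1,\dots,r_n]$.

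The key step is to absorb the diagonal factor $R(\lambda)$ into a matrix polynomial that is invertible ``at $p$''. Since $p$ is irreducible and $p\nmid r_i$, we have $\gcd(p^{\alpha_n},r_i)=1$ for each $i$, so B\'ezout's identity gives $a_i(\lambda),b_i(\lambda)\in K[\lambda]$ with $a_i p^{\alpha_n}+b_i r_i=1$. I would like to replace $R(\lambda)$ by a unimodular matrix, but $r_i(\lambda)$ need not be a unit in $K[\lambda]$, so $R(\lambda)$ is generally not unimodular. The fix is to not insist on unimodularity but only on the weaker conclusion that $p\nmid\det$: indeed $\det R(\lambda)=\prod_i r_i(\lambda)$ is not divisible by $p(\lambda)$ since $p$ is prime and divides no factor. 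Thus setting $E(\lambda):=E_0(\lambda)$ and $F(\lambda):=R(\lambda)F_0(\lambda)$ already yields $A(\lambda)=E(\lambda)D(\lambda)F(\lambda)$ with $\det E(\lambda)$ a nonzero constant (hence not divisible by $p$) and $\det F(\lambda)=\big(\prod_i r_i(\lambda)\big)\det F_0(\lambda)$, whose $p$-adic valuation is $0$ because each factor has valuation $0$. This completes the construction.

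The main obstacle is conceptual rather than computational: one must resist the temptation to force $E,F$ to be unimodular and instead recognize that the correct local hypothesis is ``$p\nmid\det$'', which is exactly what survives when the $p$-free diagonal part is shuffled into one of the multipliers. Once that is accepted, every step is routine: the exponent monotonicity is inherited from the global Smith form, the splitting $d_i=p^{\alpha_i}r_i$ is unique prime factorization in $K[\lambda]$, and the valuation computation for $\det F$ uses only that $v_p$ is additive on products and vanishes on each $r_i$ and on the nonzero constant $\det F_0$. (One could alternatively phrase the whole argument by passing to the local ring $K[\lambda]_{(p)}$, in which $p$ generates the maximal ideal and each $r_i$ becomes a unit; then $R(\lambda)$ genuinely is invertible over that ring, and clearing denominators recovers the statement over $K[\lambda]$. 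Either route works; I would present the elementary B\'ezout/valuation version since it keeps all computations inside $K[\lambda]$ and meshes with the algorithmic emphasis of the paper.)
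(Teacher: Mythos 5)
Your argument is correct and is essentially the derivation the paper intends (the paper states the theorem without a formal proof and immediately remarks that $E$, $F$ are obtained from the global Smith form of Theorem~\ref{smith} by absorbing the $p$-free parts of the invariant factors into a multiplier): you factor $d_i(\lambda)=p(\lambda)^{\alpha_i}r_i(\lambda)$, inherit $\alpha_1\le\dots\le\alpha_n$ from $d_{i-1}\mid d_i$, and move $R(\lambda)=\opn{diag}[r_1,\dots,r_n]$ into $F(\lambda)$, noting that $p\nmid\det F$ because $p$ is prime and divides none of the factors $r_i$ or the constant $\det F_0$. The only blemish is the B\'ezout coefficients $a_i,b_i$ with $a_ip^{\alpha_n}+b_ir_i=1$, which you introduce but never use; they can simply be deleted.
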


$E(\lambda)$ and $F(\lambda)$ are not uniquely determined in a local
Smith form. In particular, we can impose the additional requirement that $F(\lambda)$ be
unimodular by absorbing the missing parts of $D(\lambda)$ in Theorem
\ref{smith} into $E(\lambda)$. Then the local Smith form of
$A(\lambda)$ at $p(\lambda)$ is given by
\begin{equation}\label{eq_local_smith2}
A(\lambda)V(\lambda)=E(\lambda)D(\lambda),
\end{equation}
where $V(\lambda):=F(\lambda)^{-1}$ is a matrix polynomial.

\subsection{Multiplication and division in $R/pR$}
\label{sec:mult:RpR}
We define $R=K[\lambda]$ and $M=R^n$.  Note that $R$ is a principal
ideal domain and $M$ is a free $R$-module of rank $n$.  Suppose $p$ is
a prime element in $R$.  Since $p$ is irreducible, $R/pR$ is a field
and $M/pM$ is a vector space over this field.

Multiplication and division in $R/pR$ are easily carried
out using the companion matrix of $p$.  If we set $s:=\deg p$
and define $\gamma:K^s\rightarrow R/pR$ by
\begin{equation}
  \gamma(x)(\lambda) =
  x^\ee0 + \cdots + \lambda^{s-1}x^\ee{s-1} + pR,
  \qquad x = \big(x^\ee0;\dots;x^\ee{s-1}\big)\in K^s,
\end{equation}
we can pull back the field structure of $R/pR$ to $K^s$ to obtain
\begin{equation}
\label{eq_xyRpR}
\begin{aligned}
  xy=\gamma(x)(S)y&=[x^{(0)}I+x^{(1)}S+\cdots+x^{(s-1)}S^{s-1}]y \\
&= [y,Sy,\dots,S^{s-1}y]x = [x,Sx,\dots,S^{s-1}x]y
\end{aligned}
\end{equation}
and $x/y = [y,Sy,\dots,S^{s-1}y]^{-1}x$, where
\begin{equation} \label{eq_S_def}
  S=\left(
      \begin{array}{cccc}
        0 & \ldots & 0 & -a_0 \\[-1pt]
        1 & \ddots & \vdots & \vdots \\[-5pt]
         & \ddots & 0 & -a_{s-2} \\
        0 &  & 1 & -a_{s-1}
      \end{array}
    \right)
\end{equation}
is the companion matrix of $p(\lambda) = a_0 + a_1\lambda + \cdots +
a_{s-1}\lambda^{s-1} + \lambda^s$.  Note that $S$ represents
multiplication by $\lambda$ in $R/pR$.  The matrix
$[y,Sy,\dots,S^{s-1}y]$ is invertible when $y\ne0$ since a non-trivial
vector $x$ in its kernel would lead to non-zero polynomials
$\gamma(x),\gamma(y)\in R/pR$ whose product is zero (mod $p$), which
is impossible as $p$ is irreducible.

\subsection{Jordan Chains}\label{sec_JC}
Finding a local Smith form of a matrix polynomial over $\CC$ at
$p(\lambda)=\lambda-\lambda_0$ is equivalent to finding a canonical
system of Jordan chains \cite{vanSchagen,Wilken2007} for $A(\lambda)$ at
$\lambda_0$.  We now generalize the notion of Jordan chain to the case
of an irreducible polynomial over a field $K$.

\begin{definition}
  Suppose $A(\lambda)$ is an $n\times n$ matrix polynomial over
  a field $K$, $p(\lambda)$ is irreducible in $K[\lambda]$, and
  $\alpha\ge1$ is an integer.
  A vector polynomial $x(\lambda)\in M=K[\lambda]^n$ is called a
  {\em root function of order $\alpha$} for $A(\lambda)$ at
  $p(\lambda)$ if
\begin{equation}\label{eq_J_chain}
  A(\lambda)x(\lambda)=O(p(\lambda)^\alpha)
\end{equation}
and $p(\lambda)\nmid x(\lambda)$.  The meaning of (\ref{eq_J_chain})
is that each component of $A(\lambda)x(\lambda)$ is divisible by
$p(\lambda)^\alpha$.  If the root function $x(\lambda)$ has the form
\begin{equation}\label{eqn:x:expand}
  x(\lambda)=x^{(0)}(\lambda) + p(\lambda)x^\ee1(\lambda) + \cdots
  + p(\lambda)^{\alpha-1}x^\ee{\alpha-1}(\lambda)
\end{equation}
with $\deg x^{(k)}(\lambda)<s:=\deg p(\lambda)$, the coefficients
$x^\ee{k}(\lambda)$ are said to form a {\em Jordan chain of length
  $\alpha$} for $A(\lambda)$ at $p(\lambda)$.  A root function can
always be converted to the form (\ref{eqn:x:expand}) by truncating or
zero-padding its expansion in powers of $p(\lambda)$.  If $K$ can be
embedded in $\CC$, (\ref{eq_J_chain}) implies that over $\CC$,
$x(\lambda)$ is a root function of $A(\lambda)$ of order $\alpha$ at
each root $\lambda_j$ of $p(\lambda)$ simultaneously.
\end{definition}

\begin{definition} \label{def:canon} Several vector polynomials
  $\{x_j(\lambda)\}_{j=1}^\nu$ form a {\em system of root functions}
  at $p(\lambda)$ if
\begin{equation} \label{sys_root}
  \begin{array}{l}
    1.\,\, A(\lambda)x_j(\lambda)=O(p(\lambda)^{\alpha_j}), \qquad
    (\alpha_j\ge1,\quad 1\leq j\leq \nu) \\[3pt]
    \parbox{.8\linewidth}{$2.$ The set $\{\dot{x}_j(\lambda)\}_{j=1}^\nu$
      is linearly independent in $M/pM$ over $R/pR$, \\
      \hspace*{1.3in} where $R=K[\lambda]$, $M=R^n$,
      $\dot{x}_j=x_j+pM$.}
  \end{array}
\end{equation}
It is called {\em canonical} if (1) $\nu=\dim \ker\dot{A}$, where
$\dot{A}$ is the linear operator on $M/pM$ induced by $A(\lambda)$; (2)
$x_1(\lambda)$ is a root function of maximal order $\alpha_1$; and (3)
for $i>1$, $x_i(\lambda)$ has maximal order $\alpha_i$ among all root
functions $x(\lambda)\in M$ such that $\dot{x}$ is linearly
independent of $\dot{x}_1,\dots, \dot{x}_{i-1}$ in $M/pM$.  The
integers $\alpha_1\geq\dots\geq\alpha_\nu$ are uniquely
determined by $A(\lambda)$. We call $\nu$ the \emph{geometric multiplicity} of $p(\lambda)$.
\end{definition}

\begin{definition} An \emph{extended system of root functions}
  $x_1(\lambda)$,\dots,$x_n(\lambda)$ is a collection of vector
  polynomials satisfying (\ref{sys_root}) with $\nu$ replaced by $n$
  and $\alpha_j$ allowed to be zero. The extended system is said to be
  \emph{canonical} if, as before, the orders $\alpha_j$ are chosen to be
  maximal among root functions not in the span of previous root
  functions in $M/pM$.  The resulting sequence of numbers
  $\alpha_1\ge\cdots\ge\alpha_\nu\ge \alpha_{\nu+1}=\cdots=\alpha_n=0$
  is uniquely determined by $A(\lambda)$.\end{definition}

Given such a system (not necessarily canonical), we define the matrices
\begin{align}
  V(\lambda) &= [x_1(\lambda),\dots,x_n(\lambda)], \\
  D(\lambda) &= \opn{diag}[p(\lambda)^{\alpha_1},\dots,
  p(\lambda)^{\alpha_n}], \\
  E(\lambda) &= A(\lambda)V(\lambda)D(\lambda)^{-1}.
\end{align}
$E(\lambda)$ is a polynomial since column $j$ of $A(\lambda)V(\lambda)$
is divisible by $p(\lambda)^{\alpha_j}$.
The following theorem shows that aside from a reversal of the
convention for ordering the $\alpha_j$, finding a local Smith form is
equivalent to finding an extended canonical system of root functions:
\begin{theorem} \label{thm:equiv}
  The following three conditions are equivalent:

\vspace*{1pt}
\begin{center}
 \parbox{.9\linewidth}{
  \begin{itemize}
  \item[(1)]
    the columns $x_j(\lambda)$ of $V(\lambda)$ form an extended
    canonical system of root functions for $A(\lambda)$ at
    $p(\lambda)$ (up to a permutation of columns).
  \item[(2)] $p(\lambda)\nmid\det[E(\lambda)]$.
  \item[(3)] $\sum_{j=1}^n\alpha_j=\mu$, where $\mu$ is the algebraic
    multiplicity of $p(\lambda)$ in $\Delta(\lambda)$.
  \end{itemize}
}
\end{center}
\end{theorem}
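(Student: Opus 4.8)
The plan is to prove the chain of implications $(1)\Rightarrow(2)\Rightarrow(3)\Rightarrow(1)$, with the bulk of the work in the first step. Throughout, write $V(\lambda)=[x_1(\lambda),\dots,x_n(\lambda)]$, $D(\lambda)=\opn{diag}[p^{\alpha_1},\dots,p^{\alpha_n}]$, and $E(\lambda)=A(\lambda)V(\lambda)D(\lambda)^{-1}$, which is a polynomial matrix by the remark preceding the theorem. A key preliminary observation is that condition (2) of the system-of-root-functions definition, namely linear independence of $\{\dot x_j\}$ in $M/pM$, is equivalent to $p(\lambda)\nmid\det[V(\lambda)]$: indeed $\det[\dot V]=\dot{\det[V]}$ under the reduction $R\to R/pR$, and since $R/pR$ is a field (as $p$ is irreducible), the columns of $\dot V$ are independent iff $\det[\dot V]\ne0$ iff $p\nmid\det[V]$. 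Taking determinants in $A V = EDV^{-1}\cdot V$, i.e. in $AV = ED$, gives the scalar identity
\[
  \Delta(\lambda)\det[V(\lambda)] \;=\; \det[E(\lambda)]\cdot p(\lambda)^{\alpha_1+\cdots+\alpha_n},
\]
which is the engine for $(2)\Leftrightarrow(3)$ once we know $p\nmid\det[V]$. Writing $\mu$ for the exact power of $p$ dividing $\Delta$ and using $p\nmid\det[V]$, we read off that $v_p(\det[E]) = \mu - \sum_j\alpha_j$, where $v_p$ denotes the $p$-adic valuation in $R$. Hence $p\nmid\det[E]$ iff $\sum_j\alpha_j=\mu$, giving $(2)\Leftrightarrow(3)$ immediately, and also showing in general that $\sum_j\alpha_j\le\mu$.

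The substantive step is $(1)\Rightarrow(2)$: assuming the columns form an \emph{extended canonical} system, I must show $p\nmid\det[E]$, equivalently (by the valuation identity) $\sum_j\alpha_j=\mu$. Here is the approach. By the valuation identity it suffices to produce an extended system of root functions with $\sum_j\alpha_j\ge\mu$, since any such system has $\sum\alpha_j\le\mu$ and a canonical system maximizes the orders greedily and hence maximizes $\sum\alpha_j$ (more precisely: if $\{x_j\}$ is canonical and $\{y_j\}$ is any extended system, an exchange/majorization argument shows $\alpha_j(x)\ge\alpha_j(y)$ for all $j$ after sorting, so $\sum\alpha_j(x)\ge\sum\alpha_j(y)$). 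To exhibit a system with $\sum\alpha_j\ge\mu$, invoke Theorem~\ref{local_smith}: there exist $E_0,F_0$ polynomial with $p\nmid\det[E_0]$, $p\nmid\det[F_0]$, and $A = E_0 D_0 F_0$ with $D_0=\opn{diag}[p^{\beta_1},\dots,p^{\beta_n}]$. Taking determinants gives $\sum_j\beta_j=\mu$ exactly. Now set $V_0 := F_0^{-1}$ over the quotient field and clear denominators appropriately — better, take the columns of the adjugate-scaled inverse — so that the columns $y_j$ of a suitable $V_0$ with $AV_0 = E_0 D_0$ give a system where $A y_j = O(p^{\beta_j})$ and $p\nmid\det[V_0]$ (since $\det[F_0]$ is a unit mod $p$, $\det[V_0]$ is too). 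Reordering $\beta_j$ into nonincreasing order yields an extended system of root functions with $\sum\alpha_j = \mu$, so by maximality a canonical one also achieves $\sum\alpha_j=\mu$, proving $(2)$.

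The remaining implication $(3)\Rightarrow(1)$ is the converse direction: given an extended system with $\sum_j\alpha_j=\mu$, show it is canonical up to permutation. Sort so $\alpha_1\ge\cdots\ge\alpha_n$. Let $\alpha_1^*\ge\cdots\ge\alpha_n^*$ be the (unique) canonical orders; by the $(1)\Rightarrow(3)$ direction just proved, $\sum\alpha_j^*=\mu=\sum\alpha_j$. Since canonical orders are maximal in the greedy sense, one gets $\alpha_j\le\alpha_j^*$ for every $j$ (the same majorization argument as above applied the other way); combined with equal sums this forces $\alpha_j=\alpha_j^*$ for all $j$. Finally, one must check that matching the orders actually forces the system to \emph{be} canonical, i.e. that $x_i$ has maximal order among root functions independent of $x_1,\dots,x_{i-1}$ in $M/pM$ — but if some $x_i$ had order strictly below the maximal available order given the span of its predecessors, one could replace it and strictly increase that order without decreasing later ones, contradicting maximality of $\sum\alpha_j=\mu$. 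Hence the system is canonical.

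The main obstacle is the exchange/majorization lemma asserting that canonical orders dominate the orders of any competing extended system term-by-term (after sorting); this requires care because the linear-independence constraints in $M/pM$ couple the choices of $x_1,\dots,x_n$, and one needs a Steinitz-type exchange argument adapted to the filtration by order of vanishing. I would isolate this as a lemma: if $\dot y$ lies in the span of $\dot x_1,\dots,\dot x_k$ in $M/pM$ and $y$ has order $\beta$, then some $x_i$ ($i\le k$) can be chosen of order $\ge\beta$, proceeding by downward induction on $\beta$ and using that a suitable $R/pR$-combination of the $x_i$ congruent to $y$ mod $p$ can be lifted to a root function of order $\ge\min$ of the orders involved. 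Once that lemma is in hand, all three implications fall out from the single determinant identity above.
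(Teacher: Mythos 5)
The paper does not actually prove this theorem: it defers to \cite{vanSchagen} for the case $K=\CC$ and only remarks that Lemma~\ref{lem:det} replaces the invertibility of $E(\lambda_0)$ in the general case. So your argument is a genuine reconstruction, and its architecture is sound. The equivalence $(2)\Leftrightarrow(3)$ via the identity $\Delta(\lambda)\det[V(\lambda)]=\det[E(\lambda)]\,p(\lambda)^{\alpha_1+\cdots+\alpha_n}$, combined with $p\nmid\det[V]$ (which is Lemma~\ref{lem:det} applied to the standing hypothesis that the columns form an extended system of root functions), is complete and is precisely the role the paper assigns to that lemma. Producing a competing system achieving $\sum_j\alpha_j=\mu$ from Theorem~\ref{local_smith} via the adjugate of $F_0$ is also legitimate and not circular, since Theorem~\ref{local_smith} rests on Theorem~\ref{smith}, not on the present theorem.

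The step you must repair is the majorization claim, which you rightly identify as the crux but then propose to establish by an auxiliary lemma that is both more complicated than necessary and, as literally stated, false. The assertion ``if $\dot{y}$ lies in the span of $\dot{x}_1,\dots,\dot{x}_k$ and $y$ has order $\beta$, then some $x_i$ has order $\ge\beta$'' fails unless read as a replacement statement: cancellation among the leading terms of the $Ax_i$ can give a combination of strictly larger order than every $x_i$ entering it (e.g.\ $A=\left(\begin{smallmatrix}\lambda & -\lambda\\ 0 & \lambda^2\end{smallmatrix}\right)$, $x_1=e_1$, $x_2=e_2$ of order $1$, $y=e_1+e_2$ of order $2$). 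The clean argument needs no induction: to show $\beta_j\le\alpha_j^*$ for sorted orders $\beta_1\ge\cdots\ge\beta_n$ of any extended system versus canonical orders $\alpha_j^*$, observe that $\dot{y}_1,\dots,\dot{y}_j$ are $j$ linearly independent vectors in $M/pM$ over the field $R/pR$ and hence cannot all lie in the $(j-1)$-dimensional span of $\dot{x}_1,\dots,\dot{x}_{j-1}$; choosing $i\le j$ with $\dot{y}_i$ outside that span, the defining maximality of $\alpha_j^*$ gives $\alpha_j^*\ge\beta_i\ge\beta_j$. Likewise, in $(3)\Rightarrow(1)$ the phrase ``without decreasing later ones'' conceals a Steinitz exchange: after replacing $x_i$ by a root function of strictly larger order you must re-complete $\{\dot{x}_1,\dots,\dot{x}_{i-1},\dot{y}\}$ to a basis of $M/pM$ using $n-i$ of the remaining columns, discarding one whose order is at most $\alpha_i$, so the new total strictly exceeds $\mu$, contradicting $\sum_j\alpha_j\le\mu$. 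With these two repairs the proof is correct and consistent with the route the paper points to.
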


This theorem is proved e.g.~in \cite{vanSchagen} for the case that
$K=\CC$.  The proof over a general field $K$ is identical, except that
the following lemma is used in place of invertibility of
$E(\lambda_0)$.  This lemma also plays a fundamental role in our
construction of Jordan chains and local Smith forms.
\begin{lemma} \label{lem:det} Suppose $K$ is a field, $p$ is an
  irreducible polynomial in $R=K[\lambda]$, and $E=[y_1,\dots,y_n]$ is
  an $n\times n$ matrix with columns $y_j\in M=R^n$.  Then
  $p\nmid\det(E)\,\Leftrightarrow \{\dot{y}_1,\dots,\dot{y}_n\}$ are
  linearly independent in $M/pM$ over $R/pR$.
\end{lemma}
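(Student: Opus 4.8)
The plan is to reduce the statement to a fact about the field $R/pR$ and the vector space $M/pM$ of dimension $n$ over it, using the structure theory of finitely generated modules over the principal ideal domain $R$. First I would invoke the Smith normal form for matrices over the PID $R$: there exist unimodular (i.e.\ invertible over $R$) matrices $P$ and $Q$ such that $PEQ = \opn{diag}[e_1,\dots,e_n]$ with $e_1\mid e_2\mid\cdots\mid e_n$ in $R$. Since $\det(P)$ and $\det(Q)$ are units in $R$ (nonzero constants), we have $p\nmid\det(E)$ if and only if $p\nmid\prod_i e_i$, i.e.\ if and only if $p\nmid e_i$ for every $i$ (using that $p$ is prime). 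So the left-hand side is equivalent to: none of the invariant factors $e_i$ is divisible by $p$.

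Next I would translate the right-hand side. Passing to $M/pM$, the reduced columns $\dot y_1,\dots,\dot y_n$ are the images of the standard basis vectors under the induced map $\dot E : M/pM \to M/pM$; they are linearly independent over the field $R/pR$ precisely when $\dot E$ is injective, hence (since $M/pM$ is finite-dimensional) an isomorphism, i.e.\ when $\det(\dot E)\ne 0$ in $R/pR$. But reduction mod $p$ is a ring homomorphism $R\to R/pR$, so $\det(\dot E) = \overline{\det(E)}$; thus the right-hand side is equivalent to $p\nmid\det(E)$ directly — which almost finishes the proof in one line. To be careful, I would present it this way: the equivalence ``$\{\dot y_j\}$ linearly independent $\iff \det(\dot E)\ne 0$ in the field $R/pR$'' is standard linear algebra over a field, and ``$\det(\dot E) = \overline{\det E}$'' is functoriality of the determinant under the quotient homomorphism $R\to R/pR$; combining, $\{\dot y_j\}$ independent $\iff \overline{\det E}\ne 0$ in $R/pR$ $\iff p\nmid\det E$.

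The Smith-normal-form detour of the first paragraph is therefore not strictly needed for the bare equivalence, but I would keep a short version of it as an alternative/illuminating argument, since it is the natural bridge to the companion-matrix computations of Section~\ref{sec:mult:RpR} and to the local Smith form: it exhibits explicitly that $E$ is invertible over the localization $R_{(p)}$ exactly when $p\nmid\det E$, and that on reducing mod $p$ the map $\dot E$ becomes the diagonal map with entries $\overline{e_i}$, which is invertible iff each $\overline{e_i}\ne 0$. The only place one must be slightly attentive is the claim that linear independence of $n$ vectors in an $n$-dimensional space over a field is equivalent to nonvanishing of the determinant of the matrix they form — this is immediate, but it does use finite-dimensionality of $M/pM$, which holds because $M=R^n$ and $R/pR$ is a field (so $M/pM\cong (R/pR)^n$). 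I do not anticipate a genuine obstacle here; the ``hard part'' is purely expository, namely deciding how much of the module-theoretic background (Smith form over $R$, the isomorphism $M/pM\cong(R/pR)^n$, functoriality of $\det$) to spell out versus cite, given that the lemma is essentially a reformulation once those facts are in hand.
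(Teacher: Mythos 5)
Your core argument (second paragraph) is exactly the paper's proof: linear independence of the $\dot y_j$ over the field $R/pR$ is equivalent to $\det(\dot E)\ne 0$, and $\det(\dot E)=\det(E)+pR$ by functoriality of the determinant under reduction mod $p$. The Smith-normal-form detour in your first paragraph is correct but superfluous, as you yourself note; the paper dispenses with it entirely.
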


\begin{proof} The $\dot{y}_j$ are linearly independent iff the
  determinant of $\dot E$ (considered as an $n\times n$ matrix with
  entries in the field $R/pR$) is non-zero.  But
  \begin{equation}
    \det\dot E = \det E + pR,
  \end{equation}
  where $\det E$ is computed over $R$.  The result follows.
\end{proof}

\subsection{B\'{e}zout's Identity}\label{sec_bezout}
As $K[\lambda]$ is a principal ideal domain, B\'{e}zout's Identity
holds, which is our main tool for combining local Smith forms into a
single global Smith form.  We define the notation $\gcd(f_1,\dots,
f_l)$ to be $0$ if each $f_j$ is zero, and the monic greatest common
divisor (GCD) of $f_1,\dots, f_l$ over $K[\lambda]$, otherwise.

\begin{theorem} {\em (B\'{e}zout's Identity)} For any
two polynomials $f_1$ and $f_2$ in $K[\lambda]$, where
$K$ is a field, there exist polynomials $g_1$ and $g_2$ in
$K[\lambda]$ such that
\begin{equation}
g_1 f_1+g_2 f_2= \gcd(f_1,f_2).
\end{equation}
\end{theorem}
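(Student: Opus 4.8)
The plan is to exploit the structure of $K[\lambda]$ as a principal ideal domain directly. First I would consider the ideal $I = (f_1, f_2) = \{h_1 f_1 + h_2 f_2 : h_1, h_2 \in K[\lambda]\}$ generated by $f_1$ and $f_2$. One checks immediately that $I$ is an ideal of $K[\lambda]$: it is closed under addition and under multiplication by arbitrary elements of the ring. Since $K[\lambda]$ is a principal ideal domain, $I = (d)$ for some $d \in K[\lambda]$. If $f_1 = f_2 = 0$ the claim is trivial with $g_1 = g_2 = 0$, so assume not; then $d \neq 0$ and we may take $d$ monic by rescaling by a unit of $K$. Because $d \in I$, there exist $g_1, g_2 \in K[\lambda]$ with $g_1 f_1 + g_2 f_2 = d$, which is the desired identity once we verify that $d = \gcd(f_1, f_2)$.

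The key steps for identifying $d$ with the GCD are the following. Since $f_1, f_2 \in (d)$, the polynomial $d$ divides both $f_1$ and $f_2$, so $d$ is a common divisor. Conversely, if $e$ is any common divisor of $f_1$ and $f_2$, then $e$ divides $g_1 f_1 + g_2 f_2 = d$; hence $d$ is divisible by every common divisor, so it is the greatest one. Combined with monicity, this shows $d$ is exactly $\gcd(f_1, f_2)$ in the sense defined just before the theorem.

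The only genuine ingredient here is the fact that $K[\lambda]$ is a principal ideal domain, which in turn rests on the Euclidean division algorithm available because $K$ is a field: given the nonzero element of $I$ of least degree, division with remainder forces every other element of $I$ to be a multiple of it. I would either cite this as standard or include the one-line argument: pick $d \in I \setminus \{0\}$ of minimal degree; for any $f \in I$, write $f = qd + r$ with $\deg r < \deg d$ or $r = 0$; since $r = f - qd \in I$, minimality forces $r = 0$, so $f \in (d)$. I do not expect any real obstacle; the statement is essentially a restatement of the PID property of $K[\lambda]$, and the main point is simply to make explicit that the monic generator of $(f_1, f_2)$ is the GCD and that membership in the ideal yields the cofactors $g_1, g_2$. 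The remark worth flagging for the reader is that, unlike $\ZZ$, the field hypothesis on $K$ is essential — over a non-field ring such as $\ZZ[\lambda]$ the analogous identity fails.
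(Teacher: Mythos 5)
Your proof is correct and follows exactly the route the paper itself indicates: the paper states B\'ezout's Identity without proof, justifying it by the remark that $K[\lambda]$ is a principal ideal domain, and your argument simply makes that standard deduction explicit (the ideal $(f_1,f_2)$ is principal, its monic generator is the gcd, and membership in the ideal yields the cofactors). It also correctly handles the degenerate case $f_1=f_2=0$ consistently with the paper's convention that $\gcd(0,\dots,0)=0$.
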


B\'{e}zout's Identity can be extended to combinations of more than
two polynomials:

\begin{theorem} {\em (Generalized B\'{e}zout's Identity)} For any
scalar polynomials $f_1,\dots,f_l$ in $K[\lambda]$, there
exist polynomials $g_1,\dots,g_l$ in $K[\lambda]$ such
that
\[\sum_{j=1}^l g_j f_j= \gcd(f_1,\dots,f_l).
\]
The polynomials $g_j$ are called the {\em B\'{e}zout coefficients}
of $\{f_1,\dots,f_l\}$.
\end{theorem}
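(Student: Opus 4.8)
The plan is to argue by induction on $l$, using the two-polynomial version of B\'{e}zout's Identity as the base case and the associativity of the gcd to pass from $l-1$ terms to $l$ terms; this route has the side benefit of producing the coefficients $g_j$ by an explicit recursion, which is what the algorithm of later sections consumes.

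For $l=1$ I would simply take $g_1=0$ when $f_1=0$ and $g_1$ equal to the reciprocal of the leading coefficient of $f_1$ otherwise, so that $g_1f_1$ is the monic associate of $f_1$, i.e.\ $\gcd(f_1)$. For the inductive step, assume the statement for $l-1$ polynomials and let $f_1,\dots,f_l$ be given. Set $d=\gcd(f_1,\dots,f_{l-1})$ and use the inductive hypothesis to obtain $h_1,\dots,h_{l-1}$ with $\sum_{j=1}^{l-1}h_jf_j=d$. Then apply the two-polynomial identity to the pair $d,f_l$ to get $a,b\in K[\lambda]$ with $ad+bf_l=\gcd(d,f_l)$, and set $g_j=ah_j$ for $j<l$ and $g_l=b$; this yields
\[
  \sum_{j=1}^l g_j f_j \;=\; a\sum_{j=1}^{l-1}h_jf_j + bf_l \;=\; ad+bf_l \;=\; \gcd(d,f_l).
\]

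The one step that genuinely needs an argument is the identity $\gcd(d,f_l)=\gcd(f_1,\dots,f_l)$, i.e.\ the associativity relation $\gcd\big(\gcd(f_1,\dots,f_{l-1}),f_l\big)=\gcd(f_1,\dots,f_l)$. I would prove it from the defining property of the gcd in the PID $K[\lambda]$: the divisors of $d$ are exactly the common divisors of $f_1,\dots,f_{l-1}$, so a polynomial divides every one of $f_1,\dots,f_l$ if and only if it divides both $d$ and $f_l$; hence $\{f_1,\dots,f_l\}$ and $\{d,f_l\}$ have the same set of common divisors and therefore the same monic gcd. The degenerate cases match the convention fixed just before the theorem: if $f_1=\dots=f_{l-1}=0$ then $d=0$ and $\gcd(0,f_l)$ is the monic associate of $f_l$ (and $0$ if $f_l=0$ as well), while if every $f_j$ vanishes the choice $g_j=0$ works trivially; note also that the stated two-polynomial theorem already admits a zero argument, so no further case distinction is needed there.

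The main obstacle is therefore nothing more than this bookkeeping around the associativity of the gcd and the zero polynomials — the remainder is a direct unwinding of the two-term B\'{e}zout identity through the induction. (Alternatively one could bypass the induction by observing that the ideal $(f_1,\dots,f_l)\subseteq K[\lambda]$ is principal, generated by a unique monic $d$, which simultaneously gives $d=\gcd(f_1,\dots,f_l)$ and the existence of the $g_j$; but the recursive version is preferable here since it keeps the $g_j$ constructive.)
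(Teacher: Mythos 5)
Your proof is correct. The paper states the Generalized B\'{e}zout Identity without proof, treating it as a standard consequence of the two-polynomial case, and your induction on $l$ --- combining the inductive B\'{e}zout coefficients for $f_1,\dots,f_{l-1}$ with a two-term identity for $\gcd(f_1,\dots,f_{l-1})$ and $f_l$, justified by the associativity $\gcd\bigl(\gcd(f_1,\dots,f_{l-1}),f_l\bigr)=\gcd(f_1,\dots,f_l)$ via comparison of common divisors --- is exactly the standard argument one would supply, with the zero-polynomial conventions of Section~\ref{sec_bezout} handled correctly.
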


In particular, suppose we have $l$ distinct prime elements
$\{p_1,\dots,p_l\}$ in $K[\lambda]$, and $f_j$ is given by
$f_j=\prod_{k\neq j}^l p_k^{\beta_k}$, where $\beta_1,\dots,\beta_l$
are given positive integers and the notation $\prod_{k\ne j}^l$
indicates a product over all indices $k=1,\dots,l$ except $k=j$. Then
$\gcd\left(f_1,\dots,f_l\right)=1$, and we can find $g_1,\dots,g_l$
in $K[\lambda]$ such that
\begin{equation}\label{eq_bezout}
  \sum_{j=1}^l g_j f_j= 1.
\end{equation}
In this case, the polynomials $g_j$ are uniquely
determined by requiring $\deg(g_j)<s_j\beta_j$, where $s_j=\deg(p_j)$.
The formula \eqref{eq_bezout} modulo $p_k$ shows that $g_k$ is not
divisible by $p_k$.

The B\'{e}zout coefficients are easily computed using the extended
Euclidean algorithm \cite{clr}.  In practice, we use {\sf
  MatrixPolynomialAlgebra[HermiteForm]} in Maple to find a unimodular
matrix $Q$ such that
\begin{equation}\label{eqn:hermite}
  Q
  \begin{pmatrix} f_1\\f_2\\\vdots\\f_l \end{pmatrix} =
  \begin{pmatrix} r\\0\\\vdots\\0 \end{pmatrix},
\end{equation}
where $r=\gcd(f_1,\dots,f_l)=1$. The first row of $Q$ is
$[g_1,\dots,g_l]$. One could avoid computing the remaining rows of $Q$
by storing the sequence of elementary unimodular operations required
to reduce $[f_1;\dots;f_l]$ to $[r;0;\dots;0]$ and applying them to
the row vector $[1,0,\dots,0]$ from the right to obtain
$[g_1,\dots,g_l]$.

\section{An Algorithm for Computing a (global) Smith Form}
\label{sec_algorithm}

In this section, we describe an algorithm for computing a Smith form
of a regular $n\times n$ matrix polynomial $A(\lambda)$ over a field
$K$.  We have in mind the case where $K=\CC$, $\RR$, $\QQ$ or
$\QQ+i\QQ\subset\CC$, but the construction works for any field.
The basic procedure follows several steps, which will be explained
further below:
\begin{itemize}
\item Step 0. Compute $\Delta(\lambda)=\det[A(\lambda)]$ and
decompose it into irreducible monic factors in $K[\lambda]$,
\begin{equation}
  \Delta(\lambda)=\opn{const}\cdot p_1(\lambda)^{\kappa_1}\dots
  p_l(\lambda)^{\kappa_l}.
\end{equation}
\item Step 1. Compute a local Smith form
  \begin{equation}\label{eq_local_smith_j}
    A(\lambda)V_j(\lambda)=E_j(\lambda)
    \opn{diag}\big[p_j(\lambda)^{\kappa_{j1}},\dots,p_j(\lambda)^{\kappa_{jn}}\big]
  \end{equation}
  for each factor $p_j(\lambda)$ of $\Delta(\lambda)$.

\item Step 2. Find a linear combination $B_n(\lambda)=\sum_{j=1}^l
  g_j(\lambda)f_j(\lambda) V_j(\lambda)$ using the B\'{e}zout
  coefficients of $f_j(\lambda)=\prod_{k\ne
    j}^lp_k(\lambda)^{\kappa_{kn}}$ so that the columns of
  $B_n(\lambda)$ form an extended canonical system of root functions
  for $A(\lambda)$ with respect to each $p_j(\lambda)$.

\item Step 3. Eliminate extraneous zeros from $\det\big[A(\lambda)
  B_n(\lambda)\big]$ by finding a unimodular matrix $V(\lambda)$ such
  that $B_1(\lambda)=V(\lambda)^{-1}B_n(\lambda)$ is lower triangular.
  We will show that $A(\lambda)V(\lambda)$ is then of the form
  $E(\lambda)D(\lambda)$ with $E(\lambda)$ unimodular and $D(\lambda)$
  as in (\ref{eq_smith}).

\end{itemize}

\begin{rmk}
Once the local Smith forms are known,
the diagonal entries of the matrix polynomial $D(\lambda)$ are
given by
\[d_i(\lambda)=\prod_{j=1}^lp_j(\lambda)^{\kappa_{ji}}, \quad
i=1,\dots,n.
\]
This allows us to order the
columns once and for all in Step 2.
\end{rmk}

\subsection{A Local Smith Form Algorithm (Step 1)}\label{sec_step1}
In this section, we show how to generalize the construction in
\cite{Wilken2007} for finding a canonical system of Jordan chains for
an analytic matrix function $A(\lambda)$ over $\CC$ at $\lambda_0=0$
to finding a local Smith form for a matrix polynomial $A(\lambda)$
with respect to an irreducible factor $p(\lambda)$ of
$\Delta(\lambda)=\det[A(\lambda)]$.  The new algorithm reduces to the
``exact arithmetic'' version of the previous algorithm when
$p(\lambda)=\lambda$.  In Appendix~\ref{appendix}, we present a
variant of the algorithm that is easier to implement than the current
approach, and is closer in spirit to the construction in
\cite{Wilken2007}, but is less efficient by a factor of $s=\deg p$.

Our goal is to find matrices $V(\lambda)$ and $E(\lambda)$ such
that $p(\lambda)$ does not divide $\det[V(\lambda)]$ or
$\det[E(\lambda)]$, and such that
\begin{equation}\label{eq_loc}
  A(\lambda)V(\lambda) = E(\lambda)D(\lambda), \qquad
  D(\lambda) = \opn{diag}[p(\lambda)^{\alpha_1},\dots,p(\lambda)^{
    \alpha_n}],
\end{equation}
where $0\le\alpha_1\le\cdots\le\alpha_n$.  In our construction,
$V(\lambda)$ will be unimodular, which reduces the work in Step 3 of
the high level algorithm, the step in which extraneous zeros are
removed from the determinant of the combined local Smith forms.

We start with $V(\lambda)=I_{n\times n}$ and perform a sequence of
column operations on $V(\lambda)$ that preserve its determinant (up to
a sign) and systematically increase the orders $\alpha_i$ in
$D(\lambda)$ in (\ref{eq_loc}) until $\det[E(\lambda)]$ no longer
contains a factor of $p(\lambda)$.  This can be considered a ``breadth
first'' construction of a canonical system of Jordan chains, in
contrast to the ``depth first'' procedure described in
Definition~\ref{def:canon} above.

\begin{figure}[t]
\begin{center}
\fbox{\parbox{.9\linewidth}{
\begin{algori} \label{alg1} (Local Smith form, preliminary version)
\vspace*{4pt}
\begin{tabbing}
\hspace*{.125in} \= \hspace*{.125in} \= \hspace*{.125in} \=
\hspace*{.02in} \= \hspace*{.105in} \=
\hspace*{1.75in} \= \kill
\> $k=0$,\; $i=1$,\; $V=[x_1,\dots,x_n]=I_{n\times n}$ \\[0pt]
\> \textbf{while} $i\le n$ \\[0pt]
\>\> $r_{k-1} = n+1-i$ \qquad\qquad \emph{$r_{k-1}:=$ dim.~of space
of
J. chains of length $\ge k$} \\[0pt]
\>\> \textbf{for} $j=1,\dots,r_{k-1}$ \\[0pt]
\>\>\> $y_i = \rema(\quo(Ax_i,p^k), p)$
\>\>\> \qquad\emph{define $y_i$ so $Ax_i=p^k y_i+O(p^{k+1})$} \\[0pt]
\>\>\> \textbf{if} the set $\{\dot{y}_1,\dots,\dot{y}_i\}$ is
linearly
independent in $M/pM$ over $R/pR$\\[0pt]
\>\>\>\>\> $\alpha_i = k$, \; $i=i+1$ \>
\qquad\emph{accept $x_i$ and $y_i$, define $\alpha_i$} \\[0pt]
\>\>\> \textbf{else} \\[0pt]
\>\>\>\>\> find $\dot{a}_1$, \dots, $\dot{a}_{i-1}\in R/pR$ so
that
$\dot{y}_i - \sum_{m=1}^{i-1} \dot{a}_m \dot{y}_m=\dot{0}$ \\[3pt]
\>\>\>\> $\star$ \> $x_i^{(new)} = x_i^{(old)} - \sum_{m=1}^{i-1}
p^{k-\alpha_m} a_m x_m$ \\[1pt]
\>\>\>\>\> $x_\text{tmp} = x_i$, \; $x_m=x_{m+1}, \;
(m=i,\dots,n-1)$, \;
$x_n = x_\text{tmp}$ \\[0pt]
\>\>\> \textbf{end} if \\[0pt]
\>\> \textbf{end} for $j$ \\[0pt]
\>\> $k=k+1$ \\[0pt]
\> \textbf{end} while \\[0pt]
\> $\beta=k-1$, \; $r_\beta = 0$ \>\>\>\>\>
\emph{$\beta:=\alpha_n=$ maximal Jordan chain length}
\end{tabbing}
\end{algori}
}}
\end{center}

\caption{Algorithm for computing a local Smith form.  Here
$\quo(\cdot,\cdot)$ and $\rema(\cdot,\cdot)$ are the quotient and
remainder of polynomials:
$g=\quo(f, p)$, $r=\rema(f, p)$ $\Leftrightarrow$
$f=gp+r$, $\deg r<\deg p$. }
\label{fig:alg1}
\end{figure}

The basic algorithm is presented in Figure~\ref{fig:alg1}. The idea is
to run through the columns of $V$ in turn and ``accept'' columns
whenever the leading term of the residual $A(\lambda)x_i(\lambda)$ is
linearly independent of its predecessors; otherwise we find a linear
combination of previously accepted columns to cancel this leading term
and cyclically rotate the column to the end for further processing.
Note that for each $k$, we cycle through each unaccepted column
exactly once: after rotating a column to the end, it will not become
active again until $k$ has increased by one.  At the start of the
\emph{while} loop, we have the invariants

\vspace*{2pt}
\begin{tabbing}
\hspace*{.25in} \= \hspace*{2in} \= \kill
\> (1) $Ax_m$ is divisible by $p^k$, \> $(i\le m\le n).$ \\
\> (2) $Ax_m = p^{\alpha_m} y_m + O(p^{\alpha_m+1})$, \>
  $(1\le m < i).$ \\
\> (3) if $i\ge2$ then $\{\dot{y}_m\}_{m=1}^{i-1}$ is linearly
  independent in $M/pM$ over $R/pR$.
\end{tabbing}
\vspace*{2pt}

\noindent
The third property is guaranteed by the \emph{if} statement, and the
second property follows from the first due to the definition of
$\alpha_i$ and $y_i$ in the algorithm.  The first property is
obviously true when $k=0$; it continues to hold each time $k$ is
incremented due to step~$\star$, after which $Ax_i^{(new)}$ is
divisible by $p^{k+1}$:
\begin{align*}
  Ax_i^{(old)} - \sum_{m=1}^{i-1}p^{k-\alpha_m}a_m Ax_m &=
  p^ky_i + O(p^{k+1}) - \sum_{m=1}^{i-1}p^{k-\alpha_m}a_m\Big(
  p^{\alpha_m}y_m + O(p^{\alpha_m+1})\Big) \\ &=
  p^k\Big(y_i - \sum_{m=1}^{i-1}a_my_m\Big) + O(p^{k+1}) =
  O(p^{k+1}).
\end{align*}
This equation is independent of which polynomials $a_m\in R$ are
chosen to represent $\dot{a}_m\in R/pR$, but different choices will
lead to different (equally valid) Smith forms; in practice, we choose
the unique representatives such that $\deg a_m<s$, where
\begin{equation}
  s = \deg p.
\end{equation}
This choice of the $a_m$ leads to two additional invariants of the
\emph{while} loop, namely
\begin{tabbing}
\hspace*{.25in} \= \hspace*{2in} \= \kill
\> (4) $\deg x_m \le \max(s\alpha_m-1,0)$, \> $(1\le m<i),$ \\
\> (5) $\deg x_m \le \max(sk-1,0)$, \> $(i\le m\le n),$
\end{tabbing}
which are easily proved inductively by noting that
\begin{equation}
  \deg(p^{k-\alpha_m}a_mx_m)\le s(k-\alpha_m)+(s-1)+\deg(x_m)\le s(k+1)-1.
\end{equation}
The \emph{while} loop eventually terminates, for at the end of
each loop (after $k$ has been incremented) we have produced a
unimodular matrix $V(\lambda)$ such that
\begin{equation}
  A(\lambda)V(\lambda) = E(\lambda)D(\lambda), \qquad
  D=\opn{diag}[p^{\alpha_1},\dots,p^{\alpha_{i-1}},\underbrace{
    p^k,\dots,p^k}_{r_{k-1}\text{ times}}].
\end{equation}
Hence, the algorithm must terminate before $k$ exceeds the
algebraic multiplicity $\mu$ of $p(\lambda)$ in $\Delta(\lambda)$:
\begin{equation}
\jt  k \le \big(\sum_{m=1}^{i-1}\alpha_i\big) + (n+1-i)k \le \mu, \qquad\quad
  \Delta(\lambda)=f(\lambda)p(\lambda)^\mu, \quad p\nmid f.
\end{equation}
In fact, we can avoid the last iteration of the \emph{while}
loop if we change the test to
\begin{equation*}
  \jt\textbf{while }
  \big[\big(\sum_{m=1}^{i-1}\alpha_i\big) + (n+1-i)k\big]<\mu
\end{equation*}
and change the last line to
\begin{equation*}
  \beta=k, \qquad
  \alpha_m=k, \quad (i\le m\le n), \qquad
  r_{\beta-1}=n+1-i, \qquad
  r_\beta = 0.
\end{equation*}
We know the remaining columns of $V$ will be
accepted without having to compute the remaining $y_i$ or
check them for linear independence.  When the algorithm
terminates, we will have found a unimodular matrix $V(\lambda)$
satisfying (\ref{eq_loc}) such that the columns of
\begin{equation*}
  \dot{E}(\lambda)=[\dot{y}_1(\lambda),\dots,\dot{y}_n(\lambda)]
\end{equation*}
are linearly independent in $M/pM$ over $R/pR$.  By Lemma~\ref{lem:det},
$p(\lambda)\nmid \det[E(\lambda)]$, as required.

To implement the algorithm, we must find an efficient way to compute $y_i$,
test for linear independence in $M/pM$, find the coefficients $a_m$ to
cancel the leading term of the residual, and update $x_i$.  Motivated
by the construction in \cite{Wilken2007}, we interpret the loop over
$j$ in Algorithm~\ref{alg1} as a single nullspace calculation.

To this end,
we define $R_l=\{a\in R\;:\;\deg a<l\}$ and $M_l=R_l^n$, both
viewed as vector spaces over $K$.  Then we have an isomorphism
$\Lambda$ of vector spaces over $K$
\begin{equation}
\label{eq_Lam}
\begin{aligned}
  \Lambda:(M_s)^{k}&\rightarrow M_{sk}, \\ 
  \Lambda(x^\ee0;\dots;x^\ee{k-1})&=x^\ee0+px^\ee1+\cdots+p^{k-1}x^\ee{k-1}.
\end{aligned}
\end{equation}
At times it will be convenient to identify $R_{ls}$ with $R/p^lR$
and $M_{ls}$ with $M/p^lM$ to obtain ring and module structures for
these spaces.
We also expand
\begin{equation} \label{eq_Aexpand}
  A = A^\ee0 + pA^\ee1+\cdots + p^qA^\ee q,
\end{equation}
where $A^\ee j$ is an $n\times n$ matrix with entries in $R_s$.

By invariants (4) and (5) of the \emph{while} loop
in Algorithm~\ref{alg1}, we may write
$x_i=\Lambda(x_i^\ee0;\dots;x_i^\ee{\alpha})$ with
$\alpha=\max(k-1,0)$. Since $Ax_i$ is divisible by $p^k$, we have
\begin{equation} \label{eq_yi_def}
  y_i = \rema(\quo(Ax_i, p^k), p) =
  \sum_{j=0}^{k}\rema(A^\ee{k-j} x_i^\ee{j}, p)
  + \sum_{j=0}^{k-1} \quo( A^\ee{k-1-j} x_i^\ee{j}, p).
\end{equation}
The matrix-vector multiplications $A^\ee{k-j}x_i^\ee j$ are done in
the ring $R$ (leading to vector polynomials of degree $\le 2s-2$)
before the quotient and remainder are taken.  When $k=0$, the second
sum should be omitted, and when $k\ge1$, the $j=k$ term in the first
sum can be dropped since $x_i^\ee{k}=0$ in the algorithm.  
It is convenient to write (\ref{eq_yi_def}) in matrix form.
If $k=0$ we have
\begin{equation}
  [y_1,\dots,y_n] = A^\ee0.
\end{equation}
If $k\ge1$, suppose we have already computed the $nk\times r_{k-1}$
matrix $X_{k-1}$ with columns
\begin{equation}
  X_{k-1}(\,:\,,m+1-i) =
  \big(x_m^\ee0;\dots;x_m^\ee{k-1}\big), \qquad i\le m\le n.
\end{equation}
Note that
$\Lambda(X_{k-1})$ (acting column by column) contains the last
$r_{k-1}$ columns of $V(\lambda)$ at the start of the \emph{while}
loop in Algorithm~\ref{alg1}.  Then by (\ref{eq_yi_def}),
\begin{equation} \label{eq_yi1}
  [y_i,\dots,y_n] = \rema([A^\ee k,\dots,A^\ee1]X_{k-1}, p) +
  \quo([A^\ee{k-1},\dots,A^\ee0]X_{k-1}, p).
\end{equation}
As before, the matrix multiplications are done in the ring $R$ before
the quotient and remainder are computed. The
components of each $y_m$ belong to $R_s$.

Next we define the auxiliary matrices
\begin{equation}\label{eqn:mcA1}
  \mc{A}_k = \begin{cases}
    A^\ee0, & \quad k=0, \\
    \big[\mc{A}_{k-1} \;,\; [y_i,\dots,y_n]\big], &
    \quad 1\le k\le \beta-1.
  \end{cases}
\end{equation}
and compute the reduced row-echelon form of $\dot{\mc{A}}_k$
using Gauss-Jordan elimination over the field $R/pR$.
The reduced row-echelon form of $\dot{\mc{A}}_k$ can be interpreted as a
tableau telling which columns of $\dot{\mc{A}}_k$ are linearly independent
of their predecessors (the accepted columns), and also giving the
linear combination of previously accepted columns that will annihilate
a linearly dependent column.  On the first iteration (with $k=0$),
step $\star$ in Algorithm~\ref{alg1} will build up the matrix
\begin{equation}
  X_0=\opn{null}(\dot{\mc{A}}_0),
\end{equation}
where $\opn{null}(\cdot)$ is the standard algorithm for computing a
basis for the nullspace of a matrix from the reduced row-echelon form
(followed by a truncation to replace elements in $R/pR$ with their
representatives in $R_s$).  But rather than rotating these columns to
the end as in Algorithm~\ref{alg1}, we now \emph{append} the
corresponding $y_i$ to the end of $\mc{A}_{k-1}$ to form $\mc{A}_{k}$
for $k\ge1$.  The ``dead'' columns left behind (not accepted, not
active) serve only as placeholders, causing the resulting matrices
$\mc{A}_k$ to be nested.  We use $\opn{rref}(\cdot)$ to denote the
reduced row-echelon form of a matrix polynomial. The leading columns
of $\opn{rref}(\dot{\mc{A}}_{k})$ will then coincide with
$\opn{rref}(\dot{\mc{A}}_{k-1})$, and the nullspace matrices will also
be nested.  We denote the new columns of $\opn{null}(\dot{\mc{A}}_k)$
beyond those of $\opn{null}(\dot{\mc{A}}_{k-1})$ by $[Y_k; U_k]$:
\begin{equation}\label{eqn:ykuk1}
  \begin{pmatrix}X_0 & Y_1 & \cdots & Y_{k-1} & Y_k \\
    0 & [U_1;0] & \cdots & [U_{k-1};0] & U_k \end{pmatrix} :=
  \opn{null}(\dot{\mc{A}}_k).
\end{equation}
Note that $\mc{A}_k$ is $n\times(n+R_{k-1})$, where
\begin{equation}
  R_{-1} = 0, \qquad
  R_k = r_0 + \cdots + r_k = \dim\ker\dot{\mc{A}}_k, \qquad (k\ge0).
\end{equation}
We also see that $X_0$ is $n\times r_0$, $Y_k$ is $n\times r_k$,
$U_k$ is $r_{k-1}\times r_k$, and
\begin{equation}
  r_k\le r_{k-1}, \qquad (k\ge0).
\end{equation}
This inequality is due to the fact that the dimension of the kernel
cannot increase by more than the number of columns added.

If column $i$ of $\dot{\mc{A}}_k$ is linearly dependent on its
predecessors, the coefficients $a_m$ used in step $\star$ of
Algorithm~\ref{alg1} are precisely the (truncations of the)
coefficients that appear in column $i$ of
$\opn{rref}(\dot{\mc{A}}_k)$.  As shown in Figure~\ref{fig:extract},
the corresponding null vector
(i.e.~column of $[Y_k;U_k]$) contains the negatives of these
coefficients in the rows corresponding to the previously accepted
columns of $\dot{\mc{A}}_k$, followed by a 1 in row $i$.
Thus, in step $\star$, if $k\ge1$ and we
write $x_m=\Lambda\big( x_m^\ee{0};\dots;x_m^\ee{\alpha}\big)$ with
$\alpha=\max(\alpha_m-1,0)$, the update
\begin{align*}
  x_i^\ee{new} = x_i^\ee{old} - \sum_{m=1}^{i-1}
  p^{k-\alpha_m}a_mx_m, \qquad
  a_mx_m=\Lambda\big(z^\ee0;\dots;z^\ee{\alpha_m}\big), \\
  z^\ee{j} = \begin{cases}
    \rema(a_mx_m^\ee0, p), &\; j=0, \\
    \rema(a_mx_m^\ee{j}, p) +
    \quo(a_mx_m^\ee{j-1}, p), &\; 1\le j<\alpha_m, \\
    \quo(a_mx_m^\ee{j-1}, p), &\; j=\alpha_m \text{ and } \alpha_m>0,
  \end{cases}
\end{align*}
is equivalent to
\begin{equation} \label{eq_Xk}
\begin{aligned}
  X_k &= \iota^k(X_{-1})Y_k +
  \rema\Big(
  \big[ \iota^{k-1}\rho(X_0) \,,\,
  \dots ,\, \iota^0\rho(X_{k-1})
  \big] U_k
  \,,\,p\Big) \\
  & \hspace*{2in}
  + \quo\Big(\big[ \iota^{k}(X_0) \,,\, \dots \,,\,
  \iota^1(X_{k-1})\big]U_k\,,\, p\Big),
\end{aligned}
\end{equation}
where $\iota,\rho:(M_s)^{l}\rightarrow(M_s)^{l+1}$ act column
by column, padding them with zeros:
\begin{equation}
  \iota(x)=[0;x], \qquad \rho(x)=[x;0], \qquad x\in (M_s)^{l}, \quad
  0\in M_s.
\end{equation}
Here $\Lambda\iota\Lambda^{-1}$ is multiplication by $p$, which embeds
$M_{ls}\cong M/p^lM$ in $M_{(l+1)s}\cong M/p^{l+1}M$ as a module over
$R$, while $\rho$ is an
embedding of vector spaces over $K$ (but not an $R$-module morphism).
If we define the matrices $\XX_0=X_0$ and
\begin{equation}
  \XX_k = [\iota(\XX_{k-1}),X_k] =
  \left[\begin{pmatrix} 0_{nk\times r_0} \\ X_0\end{pmatrix},
    \begin{pmatrix} 0_{n(k-1)\times r_1} \\ X_1 \end{pmatrix},
      \dots,\bigg( X_{k} \bigg)\right], \quad (k\ge1),
\end{equation}
then (\ref{eq_Xk}) simply becomes
\begin{equation}\label{eqn:Xk1}
  X_k = [\rema(\XX_{k-1}U_k, p);Y_k] + \quo(\iota(\XX_{k-1}U_k), p).
\end{equation}
As in (\ref{eq_yi1}) above, the matrix multiplications are done in the
ring $R$ before the quotient and remainder are computed to obtain
$X_k$.
Finally, we line up the columns of $X_{k-1}$ with the last
$r_{k-1}$ columns of $\dot{\mc{A}}_k$ and extract (i.e.~accept) columns of
$X_{k-1}$ that correspond to new, linearly independent columns of
$\dot{\mc{A}}_k$.  We denote the matrix of extracted columns by
$\wtil{X}_{k-1}$.  At the completion of the algorithm, the unimodular
matrix $V(\lambda)$ that puts $A(\lambda)$ in local Smith form
is given by
\begin{equation}
  V(\lambda) = \big[\Lambda(\wtil{X}_{-1}),\dots,\Lambda(\wtil{X}_{\beta-1})
  \big].
\end{equation}

The final algorithm is presented in Figure~\ref{fig:alg2}.  In the
step marked $\bullet$, we can avoid re-computing the reduced
row-echelon form of the first $n+R_{k-2}$ columns of $\dot{\mc{A}}_k$ by
storing the sequence of Gauss-Jordan transformations \cite{golub} that
reduced $\dot{\mc{A}}_{k-1}$ to row-echelon form.  To compute $[Y_k;U_k]$,
we need only apply these transformations to the new columns of $\dot{\mc{A}}_k$
and then proceed with the row-reduction algorithm on these final
columns.  Also, if $A_0$ is large and sparse, rather than reducing
to row-echelon form, one could find kernels using an $LU$ factorization
designed to handle singular matrices.
This would allow the use of graph theory (clique analysis) to choose
pivots in the Gaussian elimination procedure to minimize fill-in.  We
also note that if $\Delta(\lambda)$ contains only one irreducible
factor, the local Smith form is a (global) Smith form of $A(\lambda)$.

\begin{figure}[t]
\begin{center}
\includegraphics[scale=.65]{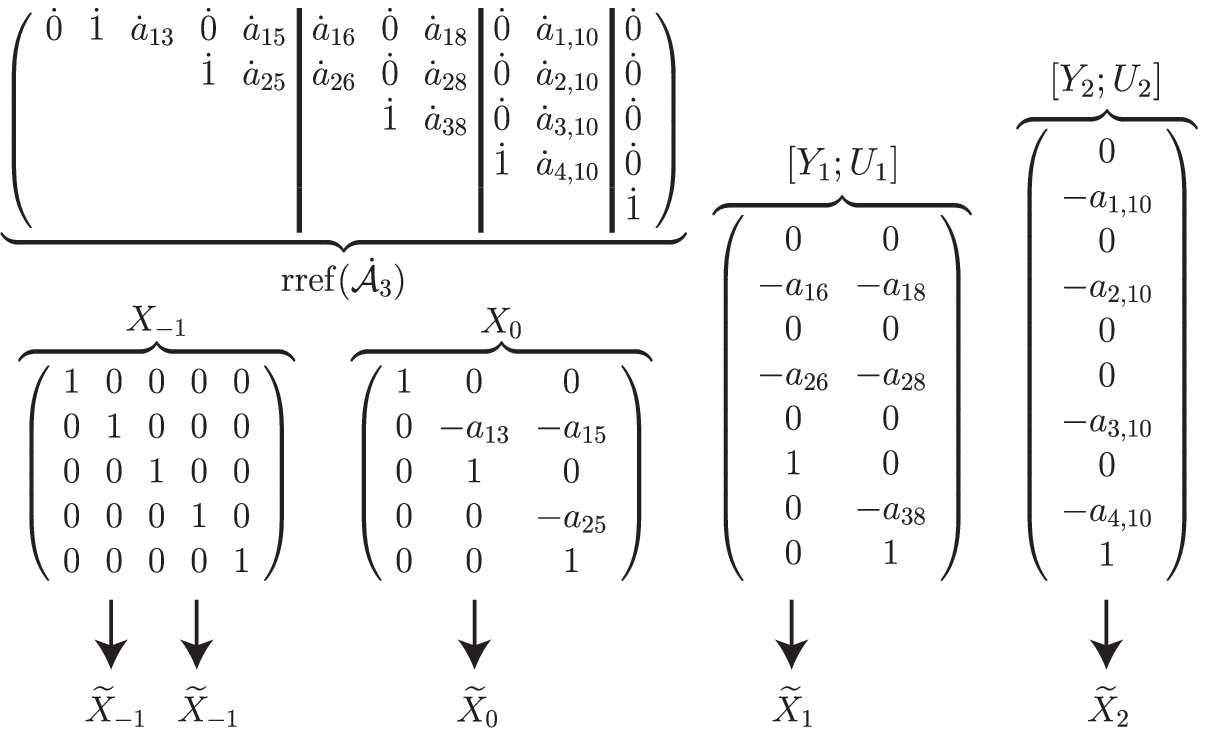}
\end{center}
\caption{The reduced row-echelon form of $\dot{\mc{A}}_\beta$ contains all
  the information necessary to construct $V(\lambda)
  =[\Lambda(\wtil{X}_{-1}),\dots,\Lambda(\wtil{X}_{s-1})]$.  An arrow
  from a column $[v;u]$ of $[Y_k;U_k]$ indicates that the vector
  $\big([\rema(\XX_{k-1}u, p);v]+\quo(\iota(\XX_{k-1}u), p)\big)$ is a column
  of $\wtil{X}_{k}$.}
\label{fig:extract}
\end{figure}

\begin{figure}[t]
\begin{center}
\fbox{\parbox{.9\linewidth}{
\begin{algori} \label{alg2} (Local Smith form, final version)
\vspace*{4pt}
\begin{tabbing}
\hspace*{.125in} \=
\hspace*{.02in} \= \hspace*{.125in} \= \hspace*{1.95in} \= \hspace*{.95in}
\= \kill
\> $k=0$, \; $R_{-1}=0$, \; $\mc{A}_0 = A^\ee0$ \\
\> $X_0 = \XX_0 = \opn{null}(\dot{\mc{A}}_0)$ \\
\> $r_0 = R_0 = \opn{num\_cols}(\XX_0)$ \>\>\> (number of columns) \\
\> $\wtil{X}_{-1}=[e_{j_1},\dots,e_{j_{n-r_0}}]$, \>\>\>
(columns $j_i$ of $\opn{rref}(\dot{\mc{A}}_0)$ start new rows) \\
\> \textbf{while} $R_k<\mu$ \>\>\> ($\mu$ = algebraic multiplicity of p) \\
\>\>\> $k=k+1$ \\
\>\>\> $\mc{A}_{k} = \big[ \mc{A}_{k-1} \, , \,
\rema(\big[A^\ee{k},\dots,A^\ee{1}\big]X_{k-1}, p) +
\quo(\big[A^\ee{k-1},\dots,A^\ee{0}\big]X_{k-1}, p) \, \big]$ \\
\>\>$\bullet$ \> $[Y_k;U_k]=\text{ new columns of }\opn{null}(\dot{\mc{A}}_{k})$
  beyond those of $\opn{null}(\dot{\mc{A}}_{k-1})$ \\
\>\>\> $r_k=\opn{num\_cols}(U_k)$, \>\> ($U_k$ is $R_{k-1}\times r_k$) \\
\>\>\> $R_k=R_{k-1}+r_k$ \\
\>\>\> $X_k=[\rema(\XX_{k-1}U_k ,p);Y_k] + \quo(\iota(\XX_{k-1}U_k), p)$
      \>\>($X_k$ is $n(k+1)\times r_k$) \\
\>\>\> $\XX_{k} = [\iota(\XX_{k-1}),X_k]$
      \>\>($\XX_k$ is $n(k+1)\times R_k$) \\
\>\>\> $\wtil{X}_{k-1}=X_{k-1}(:,[j_1,\dots,j_{r_{k-1}-r_k}])$, \>\>
(columns $n+R_{k-2}+j_i$ of  \;\;\\
\> \textbf{end while} \>\>\>\> \;
$\opn{rref}(\dot{\mc{A}}_k)$ start new rows) \\
\> $\beta=k+1$ \>\>\> (maximal Jordan chain length) \\
\> $\wtil{X}_{\beta-1}=X_{\beta-1}$ \\
\> $V(\lambda) = \big[\Lambda(\wtil{X}_{-1}),\dots,
  \Lambda(\wtil{X}_{\beta-1})\big]$
\end{tabbing}
\end{algori}
}}
\end{center}
\caption{Algorithm for computing a unimodular local Smith form.
}
\label{fig:alg2}
\end{figure}

\subsection{From Local to Global (Step 2)}\label{sec_step2}
Now that we have a local Smith form \eqref{eq_local_smith_j} for every
irreducible factor $p_j(\lambda)$ of $\Delta(\lambda)$, we can 
use the extended Euclidean algorithm to obtain a family of
polynomials $\{g_j(\lambda)\}_{j=1}^l$ with
$\deg(g_j(\lambda))<s_j\kappa_{jn}$, where $s_j=\deg(p_j)$,
such that
\begin{equation}\label{eq_gcd_n}
\sum_{j=1}^l\bigg[g_{j}(\lambda)\prod_{k=1,k\neq j}^l
p_k(\lambda)^{\kappa_{kn}}\bigg]=1,
\end{equation}
where $p_j(\lambda)^{\kappa_{jn}}$ is the last entry in the
diagonal matrix of the local Smith form at $p_j(\lambda)$. The
integers $\kappa_{jn}$ are positive. We define a matrix polynomial
$B_n(\lambda)$ via
\begin{equation}\label{eq_vi}
B_n(\lambda)=\sum_{j=1}^l\bigg[g_{j}(\lambda)
V_j(\lambda)\prod_{k\neq j}^l
p_k(\lambda)^{\kappa_{kn}}\bigg].
\end{equation}

The main result of this section is stated as follows.

\begin{proposition} \label{prop_tildeV}
The matrix polynomial $B_n(\lambda)$ in
\eqref{eq_vi} has two key properties:
\begin{enumerate}
\item Let ${b_{ni}}(\lambda)$ be the $i$th column of
  $B_n(\lambda)$. Then $A(\lambda){b_{ni}}(\lambda)$ is divisible by
  $d_i(\lambda)$, where
  $d_i(\lambda)=\prod_{j=1}^lp_j(\lambda)^{\kappa_{ji}}$ is the
  $i$th diagonal entry in $D(\lambda)$ of the Smith form.
\item $\det[B_n(\lambda)]$ is not divisible by $p_j(\lambda)$ for
  $j=1,\dots,l$.
\end{enumerate}
\end{proposition}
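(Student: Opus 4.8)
The plan is to verify the two properties of $B_n(\lambda)$ by reducing each to a statement about a single irreducible factor $p_j(\lambda)$, exploiting the B\'ezout relation \eqref{eq_gcd_n}. Fix $j\in\{1,\dots,l\}$ and write $f_j(\lambda)=\prod_{k\ne j}^l p_k(\lambda)^{\kappa_{kn}}$. The first observation is that modulo $p_j(\lambda)^{\kappa_{jn}}$ the sum defining $B_n(\lambda)$ collapses: every term with index $k\ne j$ contains the factor $p_j(\lambda)^{\kappa_{jn}}$ (since $\kappa_{jn}$ appears in the product $\prod_{m\ne k}^l p_m^{\kappa_{mn}}$), so
\[
  B_n(\lambda)\equiv g_j(\lambda)f_j(\lambda)\,V_j(\lambda)
  \pmod{p_j(\lambda)^{\kappa_{jn}}},
\]
and by \eqref{eq_gcd_n}, $g_j(\lambda)f_j(\lambda)\equiv 1\pmod{p_j(\lambda)^{\kappa_{jn}}}$, hence $g_j(\lambda)f_j(\lambda)$ is a unit in $R/p_j^{\kappa_{jn}}R$. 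Thus $B_n(\lambda)$ and $V_j(\lambda)$ differ by multiplication by this unit scalar, modulo $p_j^{\kappa_{jn}}$.

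For property (1): the $i$th column of the local Smith form \eqref{eq_local_smith_j} says $A(\lambda)$ times the $i$th column of $V_j(\lambda)$ is divisible by $p_j(\lambda)^{\kappa_{ji}}$. Since $\kappa_{ji}\le\kappa_{jn}$, the congruence above gives that $A(\lambda)b_{ni}(\lambda)\equiv g_j(\lambda)f_j(\lambda)\,A(\lambda)v_{ji}(\lambda)\pmod{p_j(\lambda)^{\kappa_{ji}}}$, where $v_{ji}$ is the $i$th column of $V_j$; the right-hand side is $0$ in $R/p_j^{\kappa_{ji}}R$ because $A(\lambda)v_{ji}(\lambda)$ already is. Hence $p_j(\lambda)^{\kappa_{ji}}\mid A(\lambda)b_{ni}(\lambda)$ for every $j$. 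Since the $p_j$ are pairwise coprime, $\prod_j p_j^{\kappa_{ji}}=d_i(\lambda)$ divides $A(\lambda)b_{ni}(\lambda)$, which is property (1).

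For property (2): reducing \eqref{eq_vi} modulo $p_j(\lambda)$ (not just $p_j^{\kappa_{jn}}$) and taking determinants gives
\[
  \det[B_n(\lambda)]\equiv \big(g_j(\lambda)f_j(\lambda)\big)^{n}\det[V_j(\lambda)]
  \pmod{p_j(\lambda)}.
\]
By the final remarks of Section~\ref{sec_bezout}, $g_j$ is not divisible by $p_j$, and $f_j$ is a product of powers of the other primes, hence also not divisible by $p_j$; so $g_j(\lambda)f_j(\lambda)$ is a nonzero element of the field $R/p_jR$, and its $n$th power is nonzero. By construction of the local Smith form in Section~\ref{sec_step1}, $V_j(\lambda)$ is unimodular, so $\det[V_j(\lambda)]$ is a nonzero constant, in particular nonzero mod $p_j$. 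Therefore $\det[B_n(\lambda)]\not\equiv 0\pmod{p_j(\lambda)}$ for each $j$, which is property (2).

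The one point requiring care — the ``main obstacle'' — is the bookkeeping in the mod-$p_j^{\kappa_{jn}}$ collapse: one must check that for $k\ne j$ the exponent of $p_j$ appearing in $\prod_{m\ne k}^l p_m^{\kappa_{mn}}$ is exactly $\kappa_{jn}\ge\kappa_{ji}$, so that each off-diagonal term genuinely vanishes to the required order; this is immediate from the definition but is the step where an index slip would break the argument. Everything else is a routine consequence of Lemma~\ref{lem:det} (or directly of $R/p_jR$ being a field) together with the properties of the $V_j$ established in Step~1 and the B\'ezout coefficients in Section~\ref{sec_bezout}.
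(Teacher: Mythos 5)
Your proof is correct and follows essentially the same route as the paper's: reduce modulo powers of $p_j$ so that the B\'ezout sum collapses to the $j$th term, then use $p_j\nmid g_j$, $p_j\nmid f_j$, and $p_j\nmid\det[V_j]$ (the paper states only this last, weaker property rather than unimodularity of $V_j$, but your version is valid for the construction at hand). The only cosmetic difference is that you phrase property (2) as a matrix congruence $B_n\equiv g_jf_jV_j\pmod{p_j}$ whereas the paper invokes multi-linearity of the determinant term by term; the content is identical.
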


\begin{proof} (1) Let $v_{ji}(\lambda)$ be the $i$th column of
$V_j(\lambda)$. Then $A(\lambda)v_{ji}(\lambda)$
is divisible by $p_j(\lambda)^{\kappa_{ji}}$ and
\begin{equation*}
{b_{ni}}(\lambda)=\sum_{j=1}^l\bigg[\prod_{k\neq j}^l
p_k(\lambda)^{\kappa_{kn}}\bigg]g_{j}(\lambda)
v_{ji}(\lambda).
\end{equation*}
Since $\kappa_{jn}\geq\kappa_{ji}$ for $1\le i\le n$ and $1\le j\le l$,
$A(\lambda){b_{ni}}(\lambda)$ is divisible by $d_i(\lambda)$.

(2) The local Smith form construction ensures that
$p_j(\lambda) \nmid \det[V_j(\lambda)]$ for each $1\le j\le l$.
Equation (\ref{eq_gcd_n}) modulo $p_j(\lambda)$ shows that
$p_j(\lambda)\nmid g_j(\lambda)$.
By definition,
\begin{equation*}
  \begin{split}
    \det[B_n(\lambda)] &= \det\big(\big[
    b_{n1}(\lambda)\,,\,\dots\,,\,b_{nn}(\lambda)\big]\big) =
    \det\big(\big[b_{ni}(\lambda)\big]_{i=1}^n\big) \\
    &=\det\bigg(\bigg[
    \sum_{j'=1}^l \bigg(\prod_{k\ne j'}^lp_k(\lambda)^{\kappa_{kn}}
    \bigg)g_{j'}(\lambda)v_{j'i}(\lambda)\bigg]_{i=1}^n\bigg).
  \end{split}
\end{equation*}
Each term in the sum is divisible by $p_j(\lambda)$ except $j'=j$.
Thus, by multi-linearity,
\begin{equation*}
  \rema(\det[B_n(\lambda)], p_j(\lambda)) = \rema\Biggl(
  \bigg[\prod_{k\ne j}^l p_k(\lambda)^{\kappa_{kn}}\bigg]^n
  \big[g_j(\lambda)\big]^n\det\big[V_j(\lambda)\big],
  \,p_j(\lambda)\Biggr)\ne 0,
\end{equation*}
as claimed.
\end{proof}

\begin{rmk}
  It is possible for $\det[B_n(\lambda)]$ to be non-constant; however,
  its irreducible factors will be distinct from
  $p_1(\lambda),\dots,p_l(\lambda)$.
\end{rmk}

\begin{rmk}\label{rmk_mod_ltog} Rather than building $B_n(\lambda)$ as a linear
combination \eqref{eq_vi}, we may form $B_n(\lambda)$ with columns
\[b_{ni}(\lambda)=\sum_{j=1}^l\bigg[\prod_{k\neq j}^l
p_k(\lambda)^{\max(\kappa_{ki},1)}\bigg]g_{ij}(\lambda)
v_{ji}(\lambda),\qquad (1\le i\le n),
\]
where $\{g_{ij}\}_{j=1}^l$ solves the extended GCD problem
\[\sum_{j=1}^l\bigg[g_{ij}(\lambda)\prod_{k\neq j}^l
p_k(\lambda)^{\max(\kappa_{ki},1)}\bigg]=1.
\]
The two properties proved above also hold for this definition of
$B_n(\lambda)$. This modification can significantly reduce the size of
the coefficients in the computation when there is a wide range of
Jordan chain lengths.  But if $\kappa_{ji}$ only changes slightly for
$1\le i\le n$, this change will not significantly affect the total
running time of the algorithm.
\end{rmk}

\subsection{Construction of Unimodular Multipliers (Step
3)}\label{sec_step3}
Given $[f_1(\lambda);\dots;f_n(\lambda)]\in
K[\lambda]^n$, we can compute the Hermite form (\ref{eqn:hermite})
to obtain a unimodular matrix $Q(\lambda)$ such that, after
reversing rows,
$Q(\lambda)f(\lambda)=[0;\dots;0;r(\lambda)]$, where
$r=\gcd(f_1,\dots,f_n)$.  We apply this procedure to the last
column of $B_n(\lambda)$ and define $V_n(\lambda)=Q(\lambda)^{-1}$.
The resulting matrix
\[
  B_{n-1}(\lambda):=V_n(\lambda)^{-1}B_n(\lambda)
\]
is zero above the main diagonal in column $n$.  We then apply this
procedure to the first $n-1$ components of column $n-1$ of
$B_{n-1}(\lambda)$ to get a new $Q(\lambda)$, and define
\begin{equation}
V_{n-1}(\lambda)=\begin{pmat}({..|})
   &  &  & 0 \cr
   & Q(\lambda)^{-1} &  & \vdots \cr
   &  &  & 0 \cr\-
  0 & \cdots & 0 & 1 \cr
\end{pmat}.
\end{equation}
It follows that
$B_{n-2}(\lambda):=V_{n-1}(\lambda)^{-1}B_{n-1}(\lambda)$
is zero above the main diagonal in columns $n-1$ and $n$.
Continuing in this fashion, we obtain unimodular matrices
$V_n(\lambda)$, \dots, $V_2(\lambda)$ such that
\begin{equation*}
  A(\lambda)B_n(\lambda)=
  A(\lambda)\underbrace{V_n(\lambda)\cdots V_2(\lambda)}_{V(\lambda)}
  V_2(\lambda)^{-1}\cdots\underbrace{
    V_n(\lambda)^{-1}B_n(\lambda)}_{B_{n-1}(\lambda)} =
  A(\lambda)V(\lambda)B_1(\lambda),
\end{equation*}
where $V(\lambda)$ is unimodular, $B_1(\lambda)$ is lower triangular,
and
\begin{equation}\label{eq_detB1}
  \det[B_1(\lambda)]=\mathrm{const}\cdot \det[B_n(\lambda)].
\end{equation}
The matrix $V(\lambda)$ puts $A(\lambda)$ in Smith form:

\begin{proposition}\label{prop_uni}
There is a unimodular matrix polynomial $E(\lambda)$ such that
\begin{equation}\label{eq_AVED}
A(\lambda)V(\lambda)=E(\lambda)D(\lambda),
\end{equation}
where $D(\lambda)$ is of the form (\ref{eq_smith}).
\end{proposition}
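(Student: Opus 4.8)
The plan is to show that
\[
  E(\lambda) := A(\lambda)V(\lambda)D(\lambda)^{-1}, \qquad
  D(\lambda) = \opn{diag}[d_1(\lambda),\dots,d_n(\lambda)], \quad
  d_i(\lambda)=\prod_{j=1}^l p_j(\lambda)^{\kappa_{ji}},
\]
is a \emph{polynomial} matrix and is in fact \emph{unimodular}. Since each $d_i(\lambda)$ is monic and $d_i\mid d_{i+1}$ (because $\kappa_{j1}\le\cdots\le\kappa_{jn}$ for every $j$), the matrix $D(\lambda)$ automatically has the shape required in (\ref{eq_smith}); so the entire content of the proposition is the two claims about $E(\lambda)$, after which (\ref{eq_AVED}) holds by definition of $E(\lambda)$.

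First I would record the divisibility facts that drive the argument. Put $C(\lambda):=A(\lambda)V(\lambda)$, which is a polynomial matrix since $V(\lambda)=V_n(\lambda)\cdots V_2(\lambda)$ is a product of unimodular polynomial matrices. From $B_1(\lambda)=V(\lambda)^{-1}B_n(\lambda)$ we get $C(\lambda)B_1(\lambda)=A(\lambda)B_n(\lambda)$, whose $i$th column is divisible by $d_i(\lambda)$ by Proposition~\ref{prop_tildeV}(1). Since $B_1(\lambda)$ is lower triangular, $\det[B_1(\lambda)]=\prod_{i=1}^n (B_1)_{ii}(\lambda)$, and by (\ref{eq_detB1}) together with Proposition~\ref{prop_tildeV}(2) this product, hence each diagonal entry $(B_1)_{ii}(\lambda)$, is not divisible by any $p_j(\lambda)$. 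As every irreducible factor of $d_i(\lambda)$ is one of the $p_j(\lambda)$, this yields $\gcd\bigl((B_1)_{ii}(\lambda),\,d_i(\lambda)\bigr)=1$ for each $i$.

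Next I would prove, by downward induction on $i$ from $n$ to $1$, that the $i$th column $C_{:,i}(\lambda)$ of $C(\lambda)$ is divisible by $d_i(\lambda)$. Using lower-triangularity of $B_1(\lambda)$, the $i$th column of $C(\lambda)B_1(\lambda)$ equals $(B_1)_{ii}C_{:,i}+\sum_{m>i}(B_1)_{mi}C_{:,m}$ (for $i=n$ the sum is empty). Each summand with $m>i$ is divisible by $d_m$ by the inductive hypothesis, hence by $d_i$ because $d_i\mid d_m$; and the whole column is divisible by $d_i$ by the previous paragraph. Therefore $(B_1)_{ii}C_{:,i}$ is divisible by $d_i$, and since $(B_1)_{ii}$ is coprime to $d_i$ in the principal ideal domain $K[\lambda]$, each component of $C_{:,i}$ is divisible by $d_i$. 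This proves $E(\lambda)=C(\lambda)D(\lambda)^{-1}$ is a polynomial matrix.

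Finally I would obtain unimodularity by taking determinants in $A(\lambda)V(\lambda)=E(\lambda)D(\lambda)$: the left side equals $\Delta(\lambda)\det[V(\lambda)]$, where $\det[V(\lambda)]$ is a nonzero constant because $V(\lambda)$ is unimodular, while $\Delta(\lambda)=\mathrm{const}\cdot\prod_{j}p_j(\lambda)^{\kappa_j}=\mathrm{const}\cdot\prod_i d_i(\lambda)=\mathrm{const}\cdot\det[D(\lambda)]$, using $\sum_i\kappa_{ji}=\kappa_j$. Hence $\det[E(\lambda)]\det[D(\lambda)]$ is a nonzero constant times $\det[D(\lambda)]$, so $\det[E(\lambda)]$ is a nonzero constant and $E(\lambda)$ is unimodular. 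The step needing the most care is the downward induction: the point is that the coprimality of each $(B_1)_{ii}$ with $d_i$ extracted from (\ref{eq_detB1}) and Proposition~\ref{prop_tildeV}(2) is precisely what permits cancelling $(B_1)_{ii}$ to conclude that column $i$ of $C(\lambda)$ is divisible by $d_i$; the remainder is bookkeeping with the divisibility chain $d_1\mid d_2\mid\cdots\mid d_n$.
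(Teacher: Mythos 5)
Your proposal is correct and follows essentially the same route as the paper's proof: the same downward induction on columns using lower-triangularity of $B_1(\lambda)$, the coprimality of the diagonal entries $(B_1)_{ii}$ with $d_i$ deduced from Proposition~\ref{prop_tildeV}(2) and (\ref{eq_detB1}), and the final determinant count for unimodularity. You merely spell out a little more explicitly why each diagonal entry of $B_1$ is coprime to $d_i$, which the paper states without elaboration.
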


\begin{proof}
  Let $r_{mi}(\lambda)$ denote the entry of $B_1(\lambda)$ in the
  $m$th row and $i$th column.  Define $y_i(\lambda)$ and
  $z_i(\lambda)$ to be the $i$th columns of $A(\lambda)V(\lambda)$ and
  $A(\lambda)V(\lambda)B_1(\lambda)$, respectively, so that
\begin{equation}\label{eq_zy}
  z_i(\lambda) = y_i(\lambda)r_{ii}(\lambda) +
  \sum_{m=i+1}^n y_m(\lambda)r_{mi}(\lambda), \qquad
  (1\le i\le n).
\end{equation}
By Proposition~\ref{prop_tildeV}, $z_i(\lambda)$ is divisible by
$d_i(\lambda)$ for $1\le i\le n$ and $p_j(\lambda)\nmid
\det[B_1(\lambda)]$ for $1\le j\le l$.  It follows that the diagonal
entries $r_{ii}(\lambda)$ of $B_1(\lambda)$ are relatively prime to
each of the $d_i(\lambda)$.  As $d_n(\lambda)$ divides
$y_n(\lambda)r_{nn}(\lambda)$ and is relatively prime to
$r_{nn}(\lambda)$, it divides $y_n(\lambda)$ alone.  Now suppose $1\le
i< n$ and we have shown that $d_m(\lambda)$ divides $y_m(\lambda)$ for
$i< m\le n$.  Then since $d_i(\lambda)$ divides $d_m(\lambda)$ for
$m>i$ and $r_{ii}(\lambda)$ is relatively prime to $d_i(\lambda)$, we
conclude from (\ref{eq_zy}) that $d_i(\lambda)$ divides
$y_i(\lambda)$.  By induction, $d_i(\lambda)$ divides $y_i(\lambda)$
for $1\le i\le n$.  Thus, there is a matrix polynomial $E(\lambda)$
such that (\ref{eq_AVED}) holds.  Because $V(\lambda)$ is unimodular
and $\det[A(\lambda)]=\opn{const}\cdot \det[D(\lambda)]$, it follows
that $E(\lambda)$ is also unimodular, as claimed.
\end{proof}

\begin{rmk} $V(\lambda)$ constructed as described above puts $A(\lambda)$
in a global Smith form whether we build $B_n(\lambda)$ as a linear
combination \eqref{eq_vi} or as in Remark~\ref{rmk_mod_ltog}.
\end{rmk}

\begin{rmk}\label{rmk_mod_uni_loop} We can stop before reaching
$V_2(\lambda)$ by adding a test
\begin{equation*}
  \jt\textbf{while }
  d_k\ne 1
\end{equation*}
to the loop in which $V(\lambda)$ is constructed.  When the loop
terminates, we have 
$V(\lambda)=V_n(\lambda)\cdots V_{k+1}(\lambda)$, where
$k$ is the largest integer for which
\begin{equation*}
d_1(\lambda) = \cdots = d_k(\lambda) = 1.
\end{equation*}
Note that $k$ is known from the local Smith form calculations.
The last $n-k$ columns of $V_n(\lambda)\cdots V_{k+1}(\lambda)$ are
the same as those of $V_n(\lambda)\cdots V_2(\lambda)$; therefore,
either can be used for $V(\lambda)$ as they contain identical Jordan
chains.
\end{rmk}

\begin{rmk}\label{rmk_Vk_Bk}
  A slight modification of this procedure can significantly reduce the
  degree of the polynomials and the size of the coefficients in the
  computation.  In this variant, rather than applying the extended GCD
  algorithm on $b_{nn}(\lambda)$ to find a unimodular matrix
  polynomial $Q(\lambda)$ so that $Q(\lambda) b_{nn}(\lambda)$ has the
  form $[0;\dots;0;r(\lambda)]$, we compute $Q(\lambda)$ that puts
  $\rema(b_{nn}(\lambda),d_n(\lambda))$ into this form. That is, we
  replace the last column of $B_n(\lambda)$ with
  $\rema(b_{nn}(\lambda),d_n(\lambda))$ before computing $Q(\lambda)$.
  To distinguish, we
  denote this new definition of $V_n(\lambda)=Q(\lambda)^{-1} $ by
  $\wtil{V}_n(\lambda)$ and the resulting $B_{n-1}(\lambda)$ by
  $\wtil{B}_{n-1}(\lambda)$. Continuing in this manner, we find
  unimodular matrix polynomials $\wtil{V}_n(\lambda),\dots,
  \wtil{V}_{k+1}(\lambda)$ by applying the procedure on
  $\rema(\tilde{b}_{ii}(\lambda),d_i(\lambda))$ for $i=n,\dots,k+1$,
  where $\tilde{b}_{ii}(\lambda)$ contains the first $i$ components of
  column $i$ of $\wtil{B}_i(\lambda)$ and $k$ is defined as in
  Remark~\ref{rmk_mod_uni_loop}. We also define
\begin{equation*}
  \bar{B}_i=\wtil{V}_{i+1}(\lambda)^{-1}\cdots\wtil{V}_n(\lambda)^{-1}B_n(\lambda),
  \qquad (k\leq i\leq n-1).
\end{equation*}
  Note that in general,
  $\bar{B}_i(\lambda)\neq\wtil{B}_i(\lambda)$. It remains
  to show that this definition of
  $\wtil{V}(\lambda)=\wtil{V}_n(\lambda)\dots\wtil{V}_{k+1}(\lambda)$,
  which satisfies
\begin{equation*}
  A(\lambda)B_n(\lambda)=
  A(\lambda)\underbrace{\wtil{V}_n(\lambda)\cdots
  \wtil{V}_{k+1}(\lambda)}_{\wtil{V}(\lambda)}
  \wtil{V}_{k+1}(\lambda)^{-1}\cdots\underbrace{
    \wtil{V}_n(\lambda)^{-1}B_n(\lambda)}_{\bar{B}_{n-1}(\lambda)} =
  A(\lambda)\wtil{V}(\lambda)\bar{B}_k(\lambda),
\end{equation*}
also puts $A(\lambda)$ in Smith form:
\begin{proposition}
There is a unimodular matrix polynomial $\wtil{E}(\lambda)$ such that
\begin{equation}\label{eq_mod_AVED}
A(\lambda)\wtil{V}(\lambda)=\wtil{E}(\lambda)D(\lambda),
\end{equation}
where $D(\lambda)$ is of the form (\ref{eq_smith}).
\end{proposition}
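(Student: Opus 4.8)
The plan is to follow the proof of Proposition~\ref{prop_uni} as closely as possible; the one real difference is that $\bar B_k(\lambda)$ is no longer lower triangular, only ``lower triangular modulo the $d_i(\lambda)$.'' First I would record the facts carried over unchanged. Since $\wtil V(\lambda)=\wtil V_n(\lambda)\cdots\wtil V_{k+1}(\lambda)$ is a product of unimodular matrices it is unimodular, $\bar B_k(\lambda)=\wtil V(\lambda)^{-1}B_n(\lambda)$, and hence $A(\lambda)B_n(\lambda)=A(\lambda)\wtil V(\lambda)\bar B_k(\lambda)$ and $\det[\bar B_k(\lambda)]=\mathrm{const}\cdot\det[B_n(\lambda)]$. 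Combined with Proposition~\ref{prop_tildeV}, this gives that the $i$th column of $A(\lambda)B_n(\lambda)$ is divisible by $d_i(\lambda)$ for every $i$ and that $p_j(\lambda)\nmid\det[\bar B_k(\lambda)]$ for every $j$; also, exactly as before, $\det[A(\lambda)]=\mathrm{const}\cdot\det[D(\lambda)]$.

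Next I would prove the structural lemma: writing $\bar B_k(\lambda)=(\bar r_{mi}(\lambda))$, the entry $\bar r_{mi}(\lambda)$ is divisible by $d_i(\lambda)$ whenever $m<i$ (vacuous for $i\le k$, where $d_i=1$). This is exactly what the $\rema(\,\cdot\,,d_i)$ step buys us. When column $i$ is processed, $\wtil V_i$ is chosen so that $\wtil V_i^{-1}$ sends $\rema(\tilde b_{ii},d_i)$ to $[0;\dots;0;r]$; since $\tilde b_{ii}$ is the vector of top $i$ entries of column $i$ of $\bar B_i$ and $\tilde b_{ii}=\rema(\tilde b_{ii},d_i)+d_i\cdot\quo(\tilde b_{ii},d_i)$, applying $\wtil V_i^{-1}$ turns these entries into $[0;\dots;0;r]$ plus $d_i$ times a polynomial vector, so in $\bar B_{i-1}$ the strictly-above-diagonal entries of column $i$ are multiples of $d_i$. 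Each subsequent $\wtil V_j^{-1}$ with $j<i$ acts by a polynomial matrix on rows $\{1,\dots,j\}\subseteq\{1,\dots,i-1\}$ and leaves rows $i,\dots,n$ untouched, so this divisibility is preserved all the way to $\bar B_k$. (The bookkeeping here needs the observation that $\wtil B_i$ and $\bar B_i$ agree in columns $1,\dots,i$ — the $\rema$ reductions and the matrices $\wtil V_n^{-1},\dots,\wtil V_{i+1}^{-1}$ only alter columns $i+1,\dots,n$ of $\bar B_i$ — which is why $\tilde b_{ii}$ may be read off $\bar B_i$.)

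Now fix $j$ and let $I_j=\{i:p_j(\lambda)\mid d_i(\lambda)\}=\{i:\kappa_{ji}\ge1\}$; by monotonicity of $\kappa_{j1}\le\cdots\le\kappa_{jn}$ this set is $\{i_0,\dots,n\}$ for some $i_0$. Reducing $\bar B_k(\lambda)$ modulo $p_j(\lambda)$ and using the structural lemma (for $i\ge i_0$ we have $p_j\mid d_i\mid\bar r_{mi}$ whenever $m<i$), the matrix over the field $R/p_jR$ is block lower triangular with bottom-right block indexed by the rows and columns in $I_j$, itself lower triangular with diagonal $\bar r_{i_0i_0},\dots,\bar r_{nn}$; since $p_j\nmid\det[\bar B_k]$, none of these vanish mod $p_j$. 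Hence $p_j\nmid\bar r_{ii}$ for every $j$ with $p_j\mid d_i$, i.e.\ $\bar r_{ii}(\lambda)$ is relatively prime to $d_i(\lambda)$ for every $i$. Letting $\tilde y_i(\lambda)$ and $\bar z_i(\lambda)$ be the $i$th columns of $A\wtil V$ and $AB_n$, the identity $AB_n=A\wtil V\bar B_k$ reads $\bar r_{ii}\tilde y_i+\sum_{m<i}\bar r_{mi}\tilde y_m+\sum_{m>i}\bar r_{mi}\tilde y_m=\bar z_i$; the first sum is divisible by $d_i$ by the structural lemma, in a downward induction on $i$ the second sum is divisible by $d_i$ by the inductive hypothesis together with $d_i\mid d_m$ for $m>i$, and $d_i\mid\bar z_i$; so $d_i\mid\bar r_{ii}\tilde y_i$, and coprimality gives $d_i\mid\tilde y_i$. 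Therefore $\wtil E(\lambda):=A(\lambda)\wtil V(\lambda)D(\lambda)^{-1}$ is a matrix polynomial satisfying (\ref{eq_mod_AVED}), and it is unimodular because $\det[\wtil V(\lambda)]$ is a nonzero constant and $\det[A(\lambda)]=\mathrm{const}\cdot\det[D(\lambda)]$.

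I expect the only real obstacle to be the structural lemma of the second paragraph: carefully tracking the interaction of the $\rema(\,\cdot\,,d_i)$ reductions with the later unimodular operations, and in particular verifying that $\wtil B_i$ and $\bar B_i$ coincide in the columns that matter, so that the claimed near-triangularity is genuinely a statement about $\bar B_k=\wtil V^{-1}B_n$. Once that is in place, the mod-$p_j$ determinant argument and the divisibility induction are straightforward adaptations of the proof of Proposition~\ref{prop_uni}.
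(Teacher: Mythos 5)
Your proof is correct and reaches the result by the same overall strategy as the paper: show that the above\nobreakdash-diagonal part of column $i$ of $\bar{B}_k(\lambda)$ is a multiple of $d_i(\lambda)$, deduce coprimality of the diagonal entries with the $d_i(\lambda)$ from $p_j(\lambda)\nmid\det[\bar{B}_k(\lambda)]$, and close with the downward induction of Proposition~\ref{prop_uni}. The packaging differs in two places. Your ``structural lemma'' is the columnwise restatement of the paper's identity (\ref{eqn:Btilde:vs:Bbar}), which exhibits $\bar{B}_k(\lambda)-\wtil{B}_k(\lambda)$ with $i$th column $d_i(\lambda)u_i(\lambda)$; your derivation by tracking the row operations (including the check that $\wtil{B}_i$ and $\bar{B}_i$ agree in columns $1,\dots,i$, so that $\tilde{b}_{ii}$ may be read off $\bar{B}_i$) is the same computation the paper carries out in its displayed recursion, so the ``obstacle'' you flag is not a gap. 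The more genuine divergence is the coprimality step: the paper compares $\det[\bar{B}_{i-1}(\lambda)]$ with $\det[\wtil{B}_{i-1}(\lambda)]$ for each $i$, using multilinearity to show their difference is divisible by $d_i(\lambda)$ and that $\tilde{r}_{ii}(\lambda)$ divides the latter, and then argues by contradiction; you instead reduce $\det[\bar{B}_k(\lambda)]$ modulo a fixed $p_j(\lambda)$ once and read off a block lower triangular structure whose lower\nobreakdash-right block is triangular with diagonal $\bar{r}_{ii}$, $i\in I_j$. Your version is a bit cleaner --- one determinant evaluation per prime rather than one comparison per column --- and makes explicit the role of the monotonicity $\kappa_{j1}\le\cdots\le\kappa_{jn}$; both arguments rest on the same inputs, namely Proposition~\ref{prop_tildeV}(2) and the divisibility bought by the $\rema(\cdot,d_i)$ reductions.
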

\begin{proof} Define $\tilde{q}_i(\lambda)=\left[\quo(\tilde{b}_{ii}(\lambda),d_i(\lambda));0\right]\in M=R^n$ for
$i=n,\dots,k+1$, where $0\in R^{n-i}$, $\tilde{b}_{ii}(\lambda)$ 
was defined above,
and $\wtil{B}_n(\lambda):=B_n(\lambda)$. Then we have
\begin{equation*}\begin{split}
\wtil{B}_{n-1}(\lambda)&=\wtil{V}_n(\lambda)^{-1}\left(B_n(\lambda)-\left[\begin{array}{c|c}0_{n\times(n-1)}&d_n(\lambda)\tilde{q}_n(\lambda)\end{array}\right]\right)\\
&=\bar{B}_{n-1}(\lambda)-\left[\begin{array}{c|c}0_{n\times(n-1)}&d_n(\lambda)\wtil{V}_n(\lambda)^{-1}\wtil{q}_n(\lambda)\end{array}\right].
\end{split}
\end{equation*}
The first $n-1$ columns of $\wtil{B}_n(\lambda)$ are the same as those of $\bar{B}_n(\lambda)$.
Continuing,
we have
\begin{equation*}
\begin{split}
&\wtil{B}_{n-2}(\lambda)=\wtil{V}_{n-1}(\lambda)^{-1}\left(\wtil{B}_{n-1}(\lambda)-\left[\begin{array}{c|c|c}0_{n\times(n-2)}&d_{n-1}(\lambda) \tilde{q}_{n-1}(\lambda)&0_{n\times 1}\end{array}\right]\right)\\
&\;\;=\wtil{V}_{n-1}(\lambda)^{-1}\left(\bar{B}_{n-1}(\lambda)-\left[\begin{array}{c|c|c}0_{n\times(n-2)}&d_{n-1}(\lambda) \tilde{q}_{n-1}(\lambda)&d_n(\lambda)\wtil{V}_n(\lambda)^{-1}\tilde{q}_n(\lambda)\end{array}\right]\right)\\
&\;\;=\bar{B}_{n-2}(\lambda)-\left[\begin{array}{c|c|c}0_{n\times(n-2)}&d_{n-1}(\lambda)\wtil{V}_{n-1}(\lambda)^{-1}\tilde{q}_{n-1}(\lambda)&d_n(\lambda)\wtil{V}_{n-1}(\lambda)^{-1}\wtil{V}_n(\lambda)^{-1}\tilde{q}_n(\lambda)\end{array}\right].
\end{split}
\end{equation*}
It follows by induction that
\begin{align}\label{eqn:Btilde:vs:Bbar}
&\wtil{B}_{k}(\lambda)=\wtil{V}_{k+1}(\lambda)^{-1}\left(\wtil{B}_{k+1}(\lambda)-\left[\begin{array}{c|c|c}0_{n\times k}&d_{k+1}(\lambda) \tilde{q}_{k+1}(\lambda)&0_{n\times (n-k-1)}\end{array}\right]\right)\\
\notag
&\;\;=\bar{B}_{k}(\lambda)-\left[\begin{array}{c|c|c|c}0_{n\times k}&d_{k+1}(\lambda)\wtil{V}_{k+1}(\lambda)^{-1}\tilde{q}_{k+1}(\lambda)&\cdots&d_n(\lambda)\wtil{V}_{k+1}(\lambda)^{-1}\cdots\wtil{V}_n(\lambda)^{-1}\tilde{q}_n(\lambda)\end{array}\right].
\end{align}
$\wtil{B}_k(\lambda)$ is zero above the main diagonal in columns $k+1$ to $n$. Define 
\begin{equation*}
u_i(\lambda):=\wtil{V}_{k+1}(\lambda)^{-1}\cdots \wtil{V}_i(\lambda)^{-1}\tilde{q}_i(\lambda),
\qquad (k+1\le i\le n).
\end{equation*}
Then the $i$th column of the difference $\bar{B}_k(\lambda)-\wtil{B}_k(\lambda)$ is $d_i(\lambda)u_i(\lambda)$
for $k+1\leq i\leq n$.

Let $\tilde{r}_{mi}(\lambda)$ denote the entry of $\wtil{B}_k(\lambda)$ in the $m$th row and $i$th column. Define $\tilde{y}_i(\lambda)$ and $z_i(\lambda)$ to be the
  $i$th columns of $A(\lambda)\wtil{V}(\lambda)$ and
  $A(\lambda)\wtil{V}(\lambda)\bar{B}_k(\lambda)$, respectively, so that
\begin{equation*}
 z_i(\lambda) = \bigg[ \tilde{y}_i(\lambda)\tilde{r}_{ii}(\lambda) +
  \sum_{m=i+1}^n \tilde{y}_m(\lambda)\tilde{r}_{mi}(\lambda)\bigg] +
  d_i(\lambda)A(\lambda)\wtil{V}(\lambda)u_i(\lambda), \qquad
  (k+1\le i\le n).
\end{equation*}
By Proposition~\ref{prop_tildeV}, $z_i(\lambda)$ is divisible by
$d_i(\lambda)$ and $p_j(\lambda)\nmid
\det[B_n(\lambda)]=\mathrm{const}\cdot\det[\bar{B}_{i-1}(\lambda)]$ for
$k+1\le i\le n$ and $1\le j\le l$.  As $d_i$ divides $d_m$ for $i\leq
m\leq n$ and since (\ref{eqn:Btilde:vs:Bbar}) holds with $k$ replaced
by $i-1$ for $k+1\le i\le n$,
$\det[\bar{B}_{i-1}(\lambda)]-\det[\wtil{B}_{i-1}(\lambda)]$ is
divisible by $d_{i}(\lambda)$ due to multi-linearity of
determinants. We also know that $\det[\wtil{B}_{i-1}(\lambda)]$ is
divisible by $\tilde{r}_{ii}(\lambda)$. Proof by contradiction shows
that $\tilde{r}_{ii}(\lambda)$ is relatively prime to $d_i(\lambda)$
for $k+1\le i\le n$.  Then we argue by induction as in the proof of
Proposition~\ref{prop_uni} to conclude that $d_i(\lambda)$ divides
$\tilde{y}_i(\lambda)$ for $k+1\le i\le n$. It holds trivially for $1\leq
i\leq k$ as $d_1=\dots=d_k=1$. Thus, there is a matrix polynomial
$\wtil{E}(\lambda)$ such that (\ref{eq_mod_AVED}) holds.  Because
$\wtil{V}(\lambda)$ is unimodular and
$\det[A(\lambda)]=\opn{const}\cdot \det[D(\lambda)]$, it follows that
$\wtil{E}(\lambda)$ is also unimodular.
\end{proof}
\end{rmk}

\section{Performance Comparison}\label{sec_numerical}

In this section, we compare our algorithm to Villard's method with
good conditioning \cite{Villard95}, which is another deterministic
sequential method for computing Smith forms with multipliers, and to
`{\sf MatrixPolynomialAlgebra[SmithForm]}' in Maple.  All three
algorithms are implemented in exact arithmetic using Maple 13.  The
maximum number of digits that Maple can use for the numerator and
denominator of a rational number (given by `{\sf
  kernelopts(maxdigits)}') is over 38 billion.
However, limitations of available memory and
running time set the limit on the largest integer number much lower
than this.  We use the variant of Algorithm~\ref{alg1} given in
Appendix~\ref{appendix} to compute local Smith forms.

To evaluate the performance of these methods, we generate several
groups of diagonal matrices $D(\lambda)$ over $\QQ$ and multiply them
on each side by unimodular matrices of the form
$L(\lambda)Z(\lambda)$, where $L(\lambda)$ is unit lower triangular and
$Z(\lambda)$ is
unit upper triangular, both with off diagonal entries of the form
$\lambda-i$ with $i\in\{-10,\dots,10\}$ a random integer. As a final
step, we apply a row or column permutation to the resulting matrix.
We find that row permutation has little effect on the running time of
the algorithms while column permutation reduces the performance of
Villard's method.
We compare the results in two extreme cases: (1) without column
permutation and (2) with columns reversed. Each process is repeated
five times for each $D(\lambda)$ and the median running time is
recorded.

We use several parameters in the comparison, including the size $n$ of
the square matrix $A(\lambda)$, the bound $d$ of the polynomial
degrees of the entries in $A(\lambda)$, the number $l$ of irreducible
factors in $\det[A(\lambda)]$, and the maximal Jordan chain length
$\kappa_{jn}$.

In Figure~\ref{fig:test1_3}, we show the running time of three tests
with linear irreducible factors of the form $p_j=\lambda-\lambda_j$.
As Villard's method and Maple compute the left and right multipliers
$U(\lambda)$ and $V(\lambda)$ while our algorithm instead computes
$E(\lambda)$ and $V(\lambda)$, we also report the cost of
inverting $E(\lambda)$ to obtain $U(\lambda)$ at the end of our
algorithm (using Maple's matrix inverse routine).  This step could be
made significantly faster by taking advantage of the fact that
$E(\lambda)$ is unimodular.
For example, one could store the
sequence of elementary unimodular operations such that
  $T(\lambda)=Q_m(\lambda)\cdots Q_1(\lambda) E(\lambda)$
is unit upper triangular.
It would not be necessary to actually form
the matrices $T(\lambda)^{-1}$ or
\begin{equation}\label{eqn:UTQ}
  U(\lambda)=T(\lambda)^{-1}Q_m(\lambda)\cdots Q_1(\lambda)
\end{equation}
as the right hand side can be applied directly to any vector
polynomial using back substitution to solve $T(\lambda)x(\lambda)=
z(\lambda)$ in the last step.
The
same idea is standard in numerical linear algebra, where the
$LU$-decomposition of a matrix is less expensive to compute than its
inverse, and is equally useful.
In the first test of Figure~\ref{fig:test1_3}, $D_n(\lambda)$ is of the form
\[D_n(\lambda)=\opn{diag}[1,\dots,1,\lambda,\lambda(\lambda-1),
\lambda^2(\lambda-1),\lambda^2(\lambda-1)^2],
\]
where the matrix size $n$ increases, starting with $n=4$. Hence, we
have $d=8$, $l=2$, and $\kappa_{1n}=\kappa_{2n}=2$ all fixed. (The
unimodular matrices in the construction of $A(\lambda)$ each have
degree 2.)  For this test, inverting $E(\lambda)$ to obtain
$U(\lambda)$ is the most expensive step of our algorithm.  Without
column permutation of the test matrices, our algorithm (with
$U(\lambda)$) and Villard's method have similar running times, both
outperforming Maple's built-in function.  With column permutation, the
performance of Villard's method drops to the level of Maple's routine
while our algorithm remains faster.
\begin{figure}[t]
\begin{center}
\includegraphics[width=\linewidth,trim=0 0 0 0,clip]{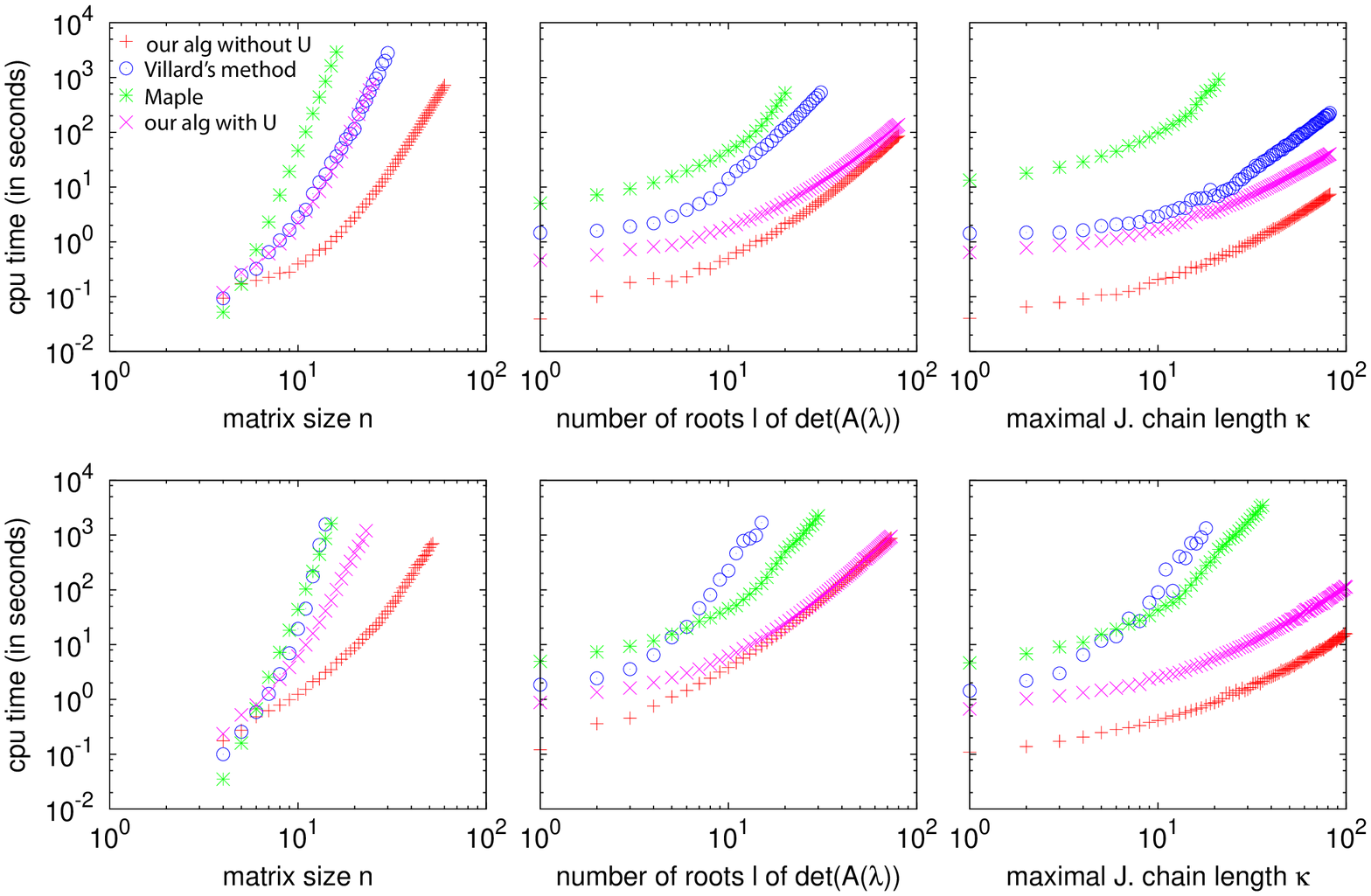}
\caption{ Comparison of running time of our algorithm (with or without
  computing $U(\lambda)$) to Villard's method, and to Maple's Smith
  form routine, on three families of test matrices.  (Top row) without
  column permutation of test matrices.  (Bottom row) with column
  permutation.  }
\label{fig:test1_3}
\end{center}
\end{figure}
For the second test, we use test matrices $D_l(\lambda)$ of size
$9\times 9$, where $l$ is the number of roots of $\det[A(\lambda)]$:
\[D_l(\lambda)=\opn{diag}[1,\dots,1,\prod_{j=1}^l(\lambda-j)],
 \qquad (l=1,2,\dots).
\]
Thus, $n=9$,
$d=l+4$ and $\kappa_{jn}=1$ for $1\le j\le l$.  This time the
relative cost of inverting $E(\lambda)$ to obtain $U(\lambda)$
decreases with $l$ in our algorithm, which is significantly faster
than the other two methods whether or not we permute columns
in the test matrices.
In the third test, we use $9\times 9$ test matrices $D_k(\lambda)$ of
the form
\[D_k(\lambda)=\opn{diag}[1,\dots,1,(\lambda-1)^k], \qquad (k=1,2,\dots),
\]
with $n=9$, $l=1$, $\kappa_{1n}=k$ and $d=k+4$.
We did not implement the re-use strategy for
computing the reduced row-echelon form of $\mc{A}_k$ by storing the
Gauss-Jordan transformations used to obtain
$\opn{rref}(\mc{A}_{k-1})$, and then continuing with only the new
columns of~$\mc{A}_k$.  This is because the built-in function {\sf
  LinearAlgebra[ReducedRowEchelonForm]} is much faster than can be
achieved by a user defined Maple code for the same purpose.  In a
lower level language (or with access to Maple's internal code), this
re-use strategy would decrease the running time of local
Smith form calculations in this test from $O(k^4)$ to $O(k^3)$.
A similar decrease in the cost of computing the left-multiplier
$U(\lambda)=E(\lambda)^{-1}$ could be achieved by computing
$T(\lambda)$ in (\ref{eqn:UTQ}) instead.

\begin{figure}[t]
\begin{center}
\includegraphics[width=\linewidth,trim=0 0 0 0,clip]{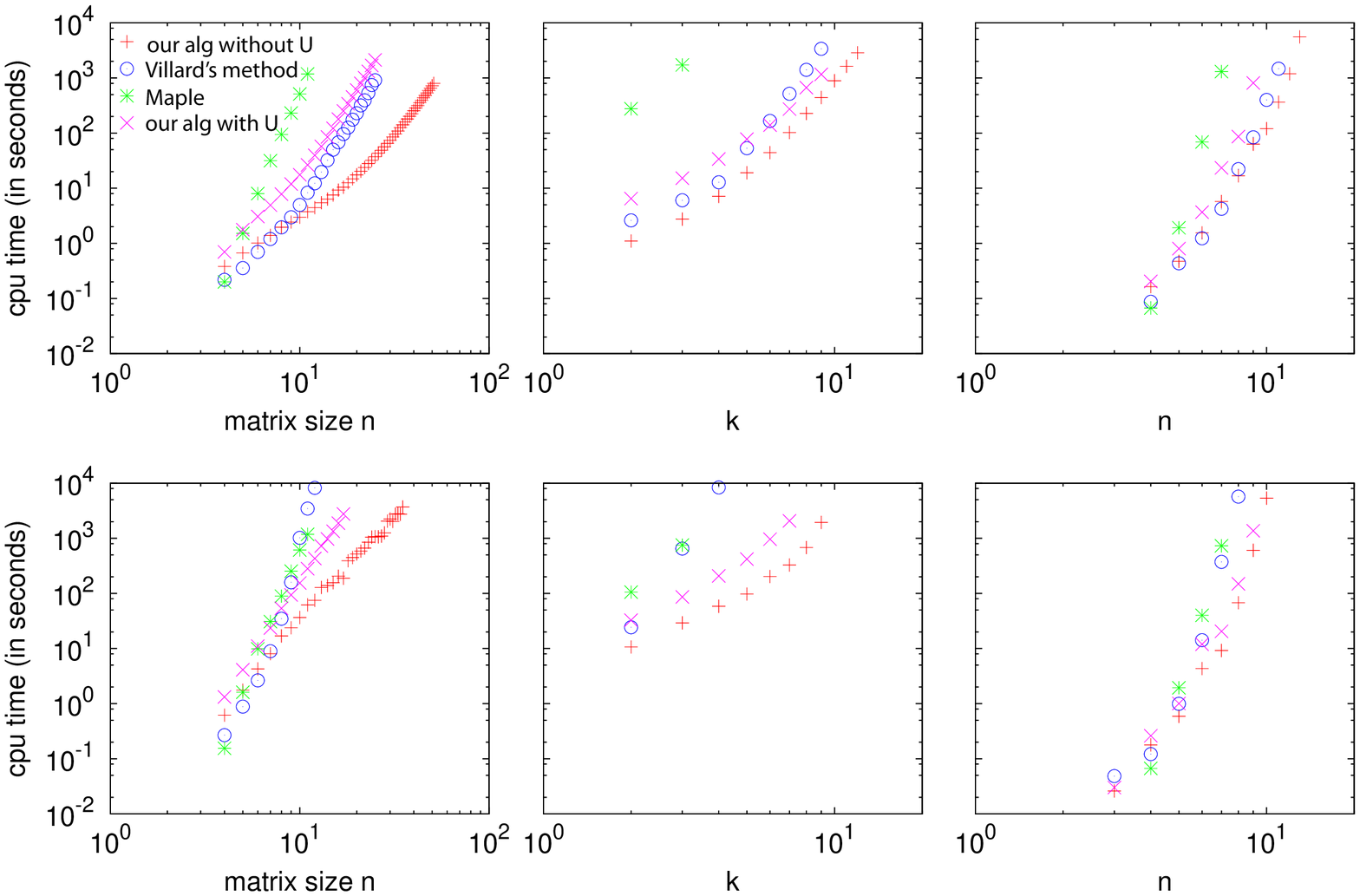}
\caption{ Comparison of running times of the algorithms for three
  test problems in which the irreducible factors $p_j(\lambda)$ of the
  determinant are of degree greater than 1.  (Top row) without column
  permutation of test matrices.  (Bottom row) with column permutation.
}
\label{fig:test4_6}
\end{center}
\end{figure}

We also evaluate the performance on three test problems (numbered
4--6) with irreducible polynomials of higher degree.  The results are
given in Figure~\ref{fig:test4_6}. In the fourth test, we use matrices
$D_n(\lambda)$ similar to those in the first test, but with
irreducible polynomials of degree $2$ and $4$.  Specifically, we
define
\[D_n(\lambda)=\opn{diag}[1,\dots,1,p_1, p_1 p_2,p_1^2 p_2,p_1^2p_2^2],
\qquad (n=4,5,\dots),
\]
where $p_1=\lambda^2+\lambda+1$,
$p_2=\lambda^4+\lambda^3+\lambda^2+1$, $\kappa_{1n}=2$,
$\kappa_{2n}=2$, and $d=16$.  When the columns of the test matrices
are permuted, our algorithm is faster than the other two methods
whether or not $U(\lambda)$ is computed.  When the columns are not
permuted, computing $U(\lambda)$ causes our method to be slower than
Villard's method.  In this test, our algorithm would benefit from
switching to the $R/pR$ version of Algorithm~\ref{alg2} rather than
the version over $K$ described in the appendix.  It would also benefit
from computing $T(\lambda)$ in (\ref{eqn:UTQ}) rather than the full
inverse $U(\lambda)=E(\lambda)^{-1}$.
In the fifth test, we use $9\times 9$
test matrices $D_k(\lambda)$ of the form
\[D_k(\lambda)=\opn{diag}[1,\dots,1,\prod_{j=1}^k (\lambda^2+j),
\prod_{j=1}^k (\lambda^2+j)^2, \prod_{j=1}^k (\lambda^2+j)^k ],
\qquad (k=2,3,\dots),
\]
with $n=9$, $l=k$, $\kappa_{jn}=k$ and $d=2k^2+4$. Both the number of
factors and maximal Jordan chain length increase with $k$.  Our
algorithm performs much better than the others when column
permutations are performed on the test matrices.  In the final test,
we define $n\times n$ matrices
\[D_n(\lambda)=\opn{diag} [1,1, (\lambda^2+1), (\lambda^2+1)^2(\lambda^2+2),
\dots, \prod_{j=1}^{n-2}(\lambda^2+j)^{n-1-j}], \qquad
(n=3,4,\dots)
\]
so that all the parameters $n$, $l=n-2$, $\kappa_{jn}=n-1-j$ and
$d=(n-1)(n-2)+4$ increase simultaneously.  All three algorithms run
very slowly on this last family of test problems.

\section{Discussion}\label{sec_discussion}

The key idea of our algorithm is that it is much less expensive to
compute local Smith forms through a sequence of nullspace calculations
than it is to compute global Smith forms through a sequence of
unimodular row and column operations.  This is because (1) row
reduction over $R/pR$ in Algorithm~\ref{alg2} (or over $K$ in
Appendix~\ref{appendix}) is less expensive than computing B\'{e}zout
coefficients over $R$; (2) the size of the rational numbers that occur
in the algorithm remain smaller (as we only deal with the leading
terms of $A$ in an expansion in powers of $p$ rather than with all of
$A$); and (3) each column of $V(\lambda)$ in a local Smith form only
has to be processed once for each power of $p$ in the corresponding
diagonal entry of $D(\lambda)$.  Once the local Smith forms are known,
we combine them to form a (global) multiplier $V(\lambda)$ for
$A(\lambda)$.  This last step does involve triangularization of
$B_n(\lambda)$ via the extended GCD algorithm, but this is less time
consuming in most cases than performing unimodular row and column
operations on $A(\lambda)$ to obtain $D(\lambda)$.  This is because
we only have to apply row operations to $B_n(\lambda)$ (as the
columns are already correctly ordered); we keep the degree of
polynomials (and therefore the number of terms) in the algorithm small
with the operation $\rema(\cdot,d_i)$; and the leading columns of
$B_n(\lambda)$ tend to be sparse (as they consist of a superposition
of local Smith forms, whose initial columns $X_{-1}$ are a subset of
the columns of the identity matrix).  Sparsity is not used explicitly
in our code, but it does reduce the work required to compute the
B\'{e}zout coefficients of a column.

\begin{figure}[t]
\begin{center}
\includegraphics[width=\linewidth,trim=0 0 0 0,clip]{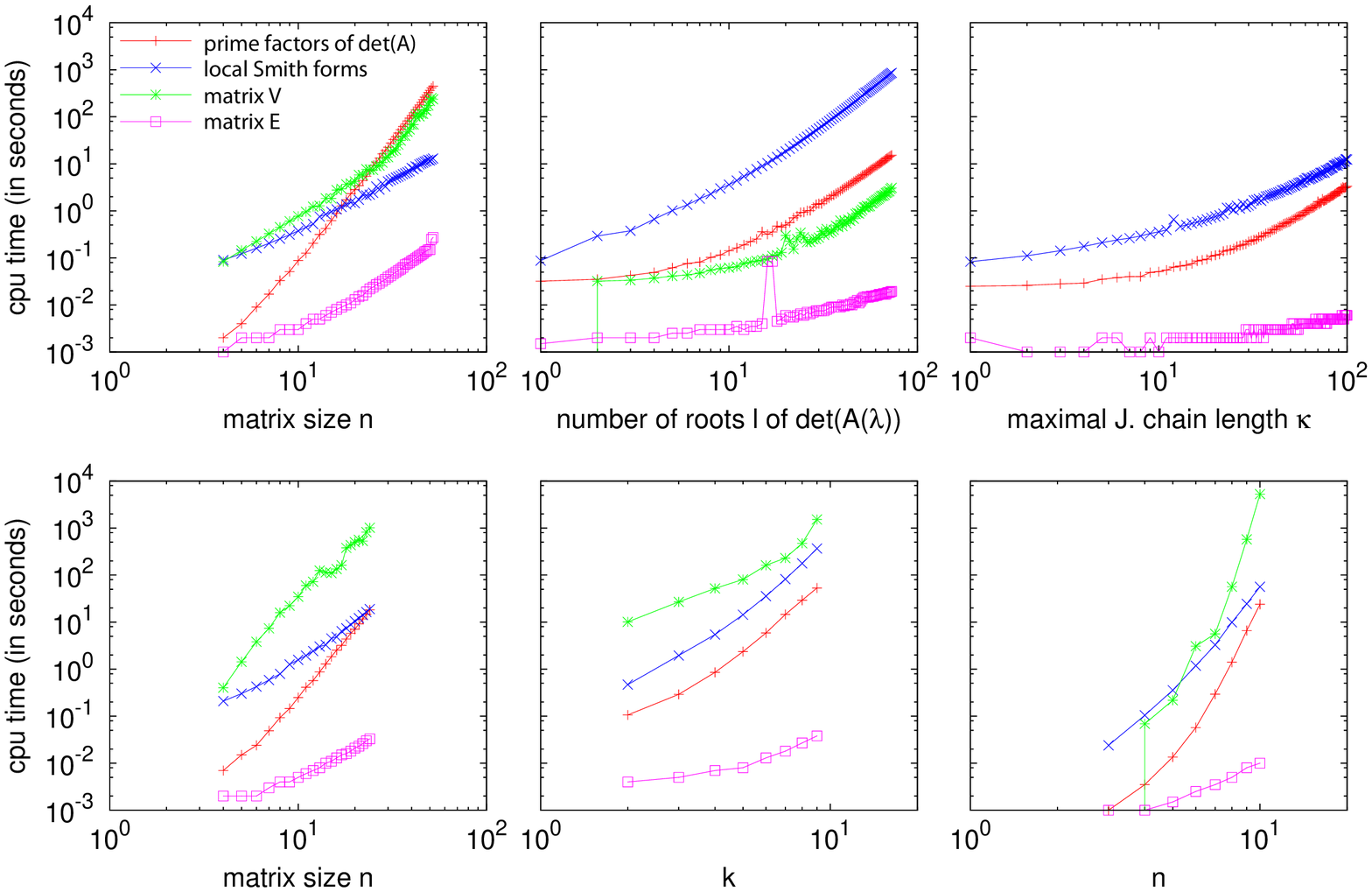}
\caption{Running time of each step of our algorithm
  for the six test problems of Section~\ref{sec_numerical}.  }
\label{fig:breakdown}
\end{center}
\end{figure}

A detailed breakdown of the running time of each step of our algorithm
is given in Figure~\ref{fig:breakdown}.  For each test in
Section~\ref{sec_numerical}, we show only the case where columns of
the test matrices are permuted; the other case is similar.  The step
labeled ``prime factors of $\opn{det}(A)$'' shows the time of
computing the determinant and factoring it into prime factors.  The
step labeled ``local Smith forms'' could be made faster in tests 4--6
by working over $R/pR$ (using Algorithm~\ref{alg2} rather than the
variant in the appendix) as the irreducible factors $p_j(\lambda)$
have degree $s_j>1$ in these tests.  Also, although it is not
implemented in this paper, this local Smith form construction would be
easy to parallelize.  The step labeled ``matrix $V$'' reports the time
of computing $V(\lambda)$ from $B_n(\lambda)$.  The cost of this step
is zero when there is only one irreducible factor in
$\det[A(\lambda)]$ as $B_n(\lambda)$ is already unimodular in that
case.  This happens when $l=1$ in the second test, in all cases in the
third test, and when $n=3$ in the last test.  Finally, the step
labeled ``matrix $E$'' reports the time of computing
$E(\lambda)=A(\lambda)V(\lambda)D(\lambda)^{-1}$.

The obvious drawback of our algorithm is that we have to compute a
local Smith form for each irreducible factor of $\Delta(\lambda)$
separately, while much of the work in deciding whether to accept a
column in Algorithm~\ref{alg1} can be done for all the irreducible
factors simultaneously by using extended GCDs.  In our numerical
experiments, it appears that in most cases, the benefit of computing
local Smith forms outweighs the fact that there are several of them to
compute.

\appendix

\section{Alternative Version of Algorithm~\ref{alg2}}
\label{appendix}

In this section we present an algebraic framework for local Smith
forms of matrix polynomials that shows the connection between
Algorithm~\ref{alg2} and the construction of canonical systems of
Jordan chains presented in \cite{Wilken2007}.  This leads to a variant
of the algorithm in which row-reduction is done in the field
$K$ rather than in $R/pR$.

Suppose $R$ is a principal ideal domain and $p$ is a prime in $R$.
$M$ defined via $M=R^n$ is a free $R$-module with a free basis
$\{(1,0,\dots,0),\dots,(0,\dots,1)\}$. Suppose $A:M\to M$ is a
$R$-module morphism. We define submodules
\begin{equation}
  N_k = \big\{ x\in M: Ax \text{ is divisible by }p^{k}\big\}, \qquad
    (k\ge 0).
\end{equation}
Then $N_k$ is a free submodule of $M$ by the structure theorem
\cite{hungerford} for finitely generated modules over a principal
ideal domain. (The structure theorem states that if $M$ is a free
module over a principal ideal domain $R$, then every submodule of $M$
is free.) The rank of $N_k$ is also $n$, as $p^{k}M\subset
N_k\subset M$.  Note that $N_0=M$ and
\begin{alignat}{2}
\label{eq_Nk1}
  N_{k+1} &\subset N_k, &\qquad &(k\ge0), \\
\label{eq_Nk2}
  N_{k+1} \cap pM &= pN_k, &\qquad &(k\ge0).
\end{alignat}
Next we define the spaces $W_k$ via
\begin{equation}
    W_k = N_{k+1}/pN_k, \qquad (k\ge-1),
\end{equation}
where $N_{-1}:=M$ so that $W_{-1}=M/pM$.  By (\ref{eq_Nk2}), the
action of $R/pR$ on $W_k$ is well-defined, i.e.~$W_k$ is a vector
space over this field.  Let us denote the canonical projection
$M\rightarrow M/pM$ by $\pi$.  Note that $\pi(pN_k)=0$, so $\pi$
is well-defined from $W_k$ to $M/pM$ for $k\ge-1$.  It is also
injective as $xp\in N_{k+1} \Rightarrow x\in N_k$, by (\ref{eq_Nk2}).
Thus, cosets $\{\dot{x}_1,\dots,\dot{x}_m\}$ are linearly independent
in $W_k$ iff $\{\pi(x_1),\dots,\pi(x_m)\}$ are linearly independent in
$M/pM$.  We define the integers
\begin{equation}
  r_k=\text{dimension of $W_k$ over $R/pR$}, \qquad (k\ge-1)
\end{equation}
and note that $r_{-1}=n$ and $r_k>0$ iff there exists $x\in
M$ such that $p\nmid x$ and $p^{k+1}\mid Ax$.  We also observe that
the truncation operator
\begin{equation}
  id:W_{k+1}\rightarrow W_k:(x+pN_{k+1})\mapsto(x+pN_{k}),
  \qquad (k\ge-1)
\end{equation}
is well-defined ($pN_{k+1}\subset pN_{k}$) and injective
($x\in N_{k+2}$ and $x\in pN_{k}$ $\Rightarrow$ $x\in pN_{k+1}$,
due to (\ref{eq_Nk2})).  We may therefore consider
$W_{k+1}$ to be a subspace of $W_k$ for $k\ge-1$, and
have the inequalities
\begin{equation}
  r_{k+1}\le r_k, \qquad (k\ge-1).
\end{equation}
The case $r_0=0$ is not interesting (as $N_k=p^{k}M$ for $k\ge0$), so
we assume that $r_0>0$.  Lemma~\ref{lem:det} shows that when
$R=K[\lambda]$, which we assume from now on, $r_0>0$ is equivalent to
the condition that $\det[A(\lambda)]$ is divisible by $p(\lambda)$.
We also assume that $r_k$ eventually decreases to zero, say
\begin{equation}
  r_k=0 \quad \Leftrightarrow \quad k\ge\beta, \qquad
  \beta:=\text{maximal Jordan chain length}.
\end{equation}
This follows from the assumption that $\det[A(\lambda)]$ is not
identically zero.  It will be useful to define the index sets
$I_k = \{i\;:\;n-r_{k-1}+1\le i\le n-r_k\}$ for $k=0,\dots,\beta$.

Any matrix $V=[x_1,\dots,x_n]$ will yield a local Smith form $AV=ED$
provided that
$x_i\in N_k$ for $i\in I_k$ ($0\le k\le \beta$) and the vectors
\begin{equation}\label{eq_basis}
  \{x_i+pN_{k-1}\}_{i\in I_k}
\end{equation}
form a basis for any complement $\wtil{W}_{k-1}$ of $W_k$ in $W_{k-1}$.
To see that $p\nmid\det E$, we use induction on $k$ to show that the vectors
\begin{equation}
  \{\quo(Ax_i, p^{\alpha_i})+pM\}_{i=1}^{n-r_k}
\end{equation}
are linearly independent in $M/pM$, where
\begin{equation}\label{eq_airk}
  \alpha_i = k, \qquad (i\in I_k).
\end{equation}
Otherwise, a linear combination of the form $\star$ in
Algorithm~\ref{alg1} would exist that belongs to $\wtil{W}_{k-1}\cap
W_k$, a contradiction.  The result that $p\nmid\det E$ follows
from Lemma~\ref{lem:det}.
The \emph{while} loop in Algorithm~\ref{alg1}
is a systematic procedure for computing such a collection
$\{x_i\}_{i\in I_k}$, and has the added benefit of yielding a
unimodular multiplier $V$.

We now wish to find a convenient representation for these spaces
suitable for computation.  Since $p^{k+1}M\subset pN_{k}$, we have
the $R$-module isomorphism
\begin{equation}
  N_{k+1}/pN_{k}\cong(N_{k+1}/p^{k+1}M)/(pN_{k}/p^{k+1}M),
\end{equation}
i.e.
\begin{equation}\label{eq_WfromWW}
  W_k\cong\WW_k/p\WW_{k-1}, \quad (k\ge0), \qquad
  \WW_k:=N_{k+1}/p^{k+1}M, \quad  (k\ge-1).
\end{equation}
Although the quotient $\WW_k/p\WW_{k-1}$ is a vector space over
$R/pR$, the spaces $\WW_k$ and $M/p^{k+1}M$ are not.
They are, however, modules over $R/p^{k+1}R$ and
vector spaces over $K$.
Note that
$A(\lambda)$ induces a linear operator
$\AAA_k$ on $M/p^{k+1}M$ with kernel
\begin{equation} \label{eq_WkerA}
  \WW_k = \ker\AAA_k, \qquad (k\ge-1).
\end{equation}
We also define
\begin{equation}
  R_k = \frac{\text{dimension of $\WW_k$ over } K}{s}, \qquad
  (k\ge-1,\quad s=\deg p)
\end{equation}
so that $R_{-1}=0$ and
\begin{equation}
  R_k=r_0+\cdots+r_k, \qquad (k\ge0),
\end{equation}
where we used $\WW_0=W_0$ together with (\ref{eq_WfromWW}) and the
fact that as a vector space over $K$, $\dim W_k=sr_k$.  By
(\ref{eq_airk}), $r_{k-1}-r_k=\#\{i\;:\;\alpha_i=k\}$, so
\begin{equation} \label{eq_alg_mult}
\begin{aligned}
  R_{\beta-1} &= r_0+\cdots+r_{\beta-1} =
  (r_{-1}-r_0)0 + (r_0-r_1)1 + \cdots + (r_{\beta-1}-r_\beta)\beta \\
  &= \alpha_1 + \cdots + \alpha_n = \mu =
  \text{algebraic multiplicity of $p$},
\end{aligned}
\end{equation}
where we used Theorem~\ref{thm:equiv} in the last step.  We also note
that $\nu:=R_0=s^{-1}\dim\ker(\AAA_0)$ can be interpreted as the
geometric multiplicity of $p$.

Equations (\ref{eq_WfromWW}) and (\ref{eq_WkerA}) reduce the problem
of computing Jordan chains to that of finding kernels of the linear
operators $\AAA_k$ over $K$.  If we represent elements $x\in M/p^{k+1}M$
as lists of coefficients $x^\ee{j,l,m}\in K$ such that
the components of $x$ involve the terms
\begin{equation}\label{eqn:enum:x}
  x^\ee{j,l,m}p^j\lambda^m, \qquad 0\le j\le k, \quad
  1 \le l\le n, \quad 0\le m\le s-1,
\end{equation}
then multiplication by $\lambda$ in $M/p^{k+1}M$ becomes the
following linear operator on $K^{sn(k+1)}$:
\begin{equation}\label{eqn:SSSk}
  \SSS_k = \begin{pmatrix}
    I\otimes S & 0 & 0 & 0 \\[1pt]
    I\otimes Z & I\otimes S & 0 & 0 \\[-4pt]
    0  & \ddots & \ddots & 0\\[1pt]
    0 & 0 & I\otimes Z & I\otimes S
  \end{pmatrix}\!, \qquad
  S \text{ as in (\ref{eq_S_def})}, \qquad
    Z = \begin{pmatrix} 0 & 0 & 1 \\[-4pt] & \ddots & 0 \\[2pt]
      0 & & 0
      \end{pmatrix}\!.
\end{equation}
Here $\SSS_k$ is a $(k+1)\times(k+1)$ block matrix, $I\otimes S$ is a
Kronecker product of matrices, $S$ and $Z$ are $s\times s$ matrices,
and $I$ is $n\times n$.  Multiplication by $\lambda^m$ is represented
by $\SSS_k^m$, which has a similar block-Toeplitz structure to
$\SSS_k$ for $2\le m\le s-1$, but with $S$ replaced by $S^m$ and $Z$
replaced by
\begin{equation}
  Z_m = \begin{cases}
    0 & \quad m=0 \\
    \sum_{l=0}^{m-1}S^lZS^{m-1-l},
  & 1\le m\le s-1.
\end{cases}
\end{equation}
The matrix $p(\SSS_k)^j$ is a shift operator with identity
blocks $I_n\otimes I_s$ on the $j$th sub-diagonal.
If we expand
\begin{equation}
  A(\lambda) = A^\ee0 + p A^\ee1 + \cdots + p^q A^\ee q, \qquad
  A^\ee j= A^\ee{j,0} + \cdots + \lambda^{s-1}A^\ee{j,s-1},
\end{equation}
the matrix $\AAA_k$ representing $A(\lambda)$ is given by
\begin{equation}\label{eqn:AAAk:def}
  \AAA_k = \left(\begin{array}{cccc}
      A_0 & 0 & \cdots & 0 \\
      A_1 & A_0 & \cdots & 0 \\
      \cdots & \cdots & \cdots & \cdots \\
      A_k & A_{k-1} & \cdots & A_0
      \end{array}\right),
\end{equation}
where
\begin{equation}\label{eqn:Aj:def}
    A_j = \begin{cases}
      \sum_{m=0}^{s-1} A^\ee{0,m}\otimes S^m, & \quad j=0, \\
      \sum_{m=0}^{s-1} \big[ A^\ee{j,m}\otimes S^m +
      A^\ee{j-1,m}\otimes Z_m\big], & \quad j\ge1.
    \end{cases}
\end{equation}
This formula may be derived by observing that the matrix
representation of the action of $p(\lambda)^j\lambda^m A^\ee{j,m}$ on
$M/p^{k+1}M$ is block Toeplitz with $A^\ee{j,m}\otimes S^m$ on the
$j$th sub-diagonal and $A^\ee{j,m}\otimes Z_m$ on the $(j+1)$st.
Defining $A_j$ this way avoids the need to compute remainders and
quotients in subsequent steps (such as occur in Algorithm~\ref{alg2}).

Next we seek an efficient method of computing a basis matrix $\XX_k$
for the nullspace $\WW_k=\ker\AAA_k$.
Suppose $k\ge1$ and we have computed $\mathbb{X}_{k-1}$.  The first
$k$ blocks of equations in $\AAA_{k}\XX_{k}=0$ imply there are
matrices $\UU_k$ and $\YY_k$ such that
$\XX_{k}=[\XX_{k-1}\UU_k;\YY_k]$, while the last block
of equations is
\begin{equation}\label{eqn:mcA:def}
\overbrace{\left(
    \begin{array}{cccc}
      A_k & \dots & A_0
    \end{array}
  \right)\left(
    \begin{array}{cc}
      0 & \XX_{k-1} \\
      I_{sn\times sn} & 0\\
    \end{array}\right)}^{\mc{A}_{k}}
\hspace*{-1.104in}
\underbrace{\phantom{\left(
      \begin{array}{cc}
        0 & \XX_{k-1} \\
        I_{sn\times sn} & \BB_{k-1} \\
      \end{array}\right)} 
  \left(\begin{array}{c}
      \YY_k \\
      \UU_k \\
    \end{array}
  \right)}_{\XX_{k}}
=\left(
  \begin{array}{c}
    0_{sn\times sR_{k}}
  \end{array}
\right).
\end{equation}
The
matrices $\XX_k$ can be built up recursively by setting
$X_0=\XX_0=\YY_0=\opn{null}(A_0)$,
defining $\UU_0$ to be an empty matrix (with zero rows and $sR_0$
columns), and computing
\begin{align*}
  &\mc{A}_k = \Big( \mc{A}_{k-1} \,,\,
  [A_k,\dots,A_1]X_{k-1} \Big), \qquad (k\ge 1) \\
  &[Y_k;U_k] = \text{new columns of } \opn{null}(\mc{A}_k) \text{
    beyond those of } \opn{null}(\mc{A}_{k-1}), \\
  &[\YY_k;\UU_k] =
  \begin{pmatrix}
    \YY_{k-1} & Y_k \\
    [\UU_{k-1};0] & U_k
  \end{pmatrix}, \qquad
  \begin{aligned}
    X_k &= [\XX_{k-1}U_k; Y_k], \\[2pt]
    \XX_k &= [\iota(\XX_{k-1}),X_k].
  \end{aligned}
\end{align*}
Here $\iota:K^{snl}\rightarrow K^{sn(l+1)}$ represents multiplication
by $p$ from $M/p^lM$ to $M/p^{l+1}M$:
\begin{equation}
  \iota([x^\ee0;\dots;x^\ee{l-1}]) = [0;x^\ee0;\dots;x^\ee{l-1}], \qquad
  x^\ee j,\,0\in K^{sn}.
\end{equation}
By construction, $\XX_k=[\iota(\XX_{k-1}),X_k]$ is a basis for $\WW_k$
when $k\ge1$; it follows that $X_k+\iota(\WW_{k-1})$ is a basis for
$W_k$ when $W_k$ is viewed as a vector space over $K$.  We define
$X_0=\XX_0$ and $X_{-1}=I_{sn\times sn}$ to obtain bases for $W_0$ and
$W_{-1}$ as well.

But we actually want a basis for $W_k$ viewed as a vector space over
$R/pR$ rather than $K$.  Fortunately, all the matrices in this
construction are manipulated $s\times s$ blocks, and the desired basis
over $R/pR$ may be obtained by selecting the first column from each
supercolumn (group of $s$ columns) of~$X_k$.  Indeed, if
$[x_1,\dots,x_s]$ is a supercolumn of $X_k$, we are able to prove that
$x_{j} - \SSS_k x_{j-1} \in \iota(\WW_{k-1})$ for $2\le j\le s$.
Since $\SSS_k$ represents multiplication by $\lambda$, these columns
are all equivalent over $R/pR$.  We are also able to prove that
constructing $X_k$ in this way (using the first column of each
supercolumn) 
is equivalent to Algorithm~\ref{alg2}, i.e.~it yields the same
unimodular matrix $V(\lambda)$ that puts $A(\lambda)$ in local
Smith form.
We omit the proof as it is long and technical, involving a careful
comparison of nullspace calculations via row-reduction in the two
algorithms.

In practice, this version of the algorithm (over $K$) is
easier to implement, but the other version (over $R/pR$) should be
about $s$ times faster as the cost of multiplying two elements of
$R/pR$ is $O(s^2)$ while the cost of multiplying two $s\times s$
matrices is $O(s^3)$.  The results in Section~\ref{sec_numerical} were
computed as described in this appendix (over $K=\QQ$).

\bibliographystyle{plain}
\bibliography{reference}
\end{document}